\documentclass[runningheads]{llncs}
\usepackage[T1]{fontenc}
\usepackage{graphicx}
\usepackage{orcidlink}
\usepackage{amsfonts} 
\usepackage{bbding}
\usepackage{graphicx}
\usepackage{hyperref}
\usepackage{algorithm}
\usepackage{algpseudocode}
\usepackage{listings}
\usepackage{comment}
\usepackage{tikz}
\usepackage{subfig}
\usepackage{wrapfig}
\usepackage{rotating}
\usepackage{multirow}
\usepackage{comment}
\usepackage{listings}
\usepackage{placeins}
\usepackage{xspace}
\usepackage{siunitx}
\usepackage{stmaryrd}
\usepackage{amsmath} 
\usepackage[capitalize]{cleveref}
\usepackage{enumitem}
\usepackage{ragged2e}
\usepackage{mathtools}
\usepackage{caption}
\usepackage{etoolbox}
\usepackage{booktabs}

\AtBeginEnvironment{algorithmic}{\footnotesize}
\algrenewcommand\algorithmicindent{0.6em}
\algnewcommand\algorithmicinput{\textbf{Input:}}
\algnewcommand\algorithmicoutput{\textbf{Output:}}
\algtext*{EndIf}
\algtext*{EndFor}
\newcommand{\algscale}{0.82}
\newlength{\algwidth}
\newcommand{\st}{\;\ifnum\currentgrouptype=16 \middle\fi|\;}

\newcommand{\rob}{\mu}

\newcommand{\sem}[1]{\llbracket #1 \rrbracket}
\renewcommand{\P}{\mathcal{P}}
\newcommand{\Q}{\mathcal{Q}}

\newcommand{\RR}{\mathbb{R}}

\newcommand{\TT}{\mathbb{T}}

\newcommand{\dom}{{\textit{d}}}

\begin{document}
\title{Quantitative Monitoring of \\ Signal First-Order Logic}
\titlerunning{Quantitative Monitoring of Signal First-Order Logic}
\author{
	Marek Chalupa\inst{1}\orcidlink{0000-0003-1132-5516}, Thomas A. Henzinger\inst{2}\orcidlink{0000-0002-2985-7724}, N. Ege Sara{\c{c}}\inst{3}\orcidlink{0009-0000-2866-8078}, Emily Yu\inst{4}\orcidlink{0000-0002-4993-773X}
}
\authorrunning{M. Chalupa, T.A. Henzinger, N.E. Sara\c{c}, E. Yu}
\institute{
    Zeroth Research, United Kingdom \\ \email{marek.chalupa@zeroth.org} \and
    Institute of Science and Technology Austria, Austria \\ \email{tah@ista.ac.at} \and 
    CISPA Helmholtz Center for Information Security, Germany \\ \email{ege.sarac@cispa.de} \and
	Leiden University, Netherlands \\ \email{z.yu@liacs.leidenuniv.nl}}
\maketitle

\begin{abstract}
Runtime monitoring checks, during execution, whether a partial signal produced by a hybrid system satisfies its specification.
Signal First-Order Logic (SFO) offers expressive real-time specifications over such signals, but currently comes only with Boolean semantics and has no tool support.
We provide the first robustness-based quantitative semantics for SFO, enabling the expression and evaluation of rich real-time properties beyond the scope of existing formalisms such as Signal Temporal Logic.
To enable online monitoring, we identify a past-time fragment of SFO and give a pastification procedure that transforms bounded-response SFO formulas into equisatisfiable formulas in this fragment.
We then develop an efficient runtime monitoring algorithm for this past-time fragment and evaluate its performance on a set of benchmarks, demonstrating the practicality and effectiveness of our approach.
To the best of our knowledge, this is the first publicly available prototype for online quantitative monitoring of full SFO.
\keywords{Signal first-order logic \and Robustness-based quantitative semantics \and Online runtime monitoring.}
\end{abstract}

\section{Introduction}
Runtime verification checks executions of a system against formal specifications by monitoring the system's behavior during runtime.
Online monitors are lightweight pieces of software that operate in parallel with the system, checking partial traces on the fly as the execution unfolds.
In contrast, offline monitoring analyzes the complete trace only after the system has finished executing.

Previous work~\cite{DBLP:conf/emsoft/BakhirkinFHN18} introduced Signal First-Order Logic (SFO) as a rich specification language for expressing real-time properties of real-valued signals.
SFO offers greater expressiveness than the widely used Signal Temporal Logic (STL)~\cite{DBLP:conf/formats/MalerN04}, enabling more precise modeling of complex system behaviors.
For example, SFO allows the formalization of bounded stabilization properties: following a significant disturbance, the signal must return to its pre-disturbance value within 10 timestamps.
This level of expressiveness goes beyond traditional temporal logics and makes SFO well suited for reasoning about the dynamic behavior of hybrid systems.
Technically, SFO combines first-order logic with linear arithmetic and uninterpreted unary function symbols to represent real-valued signals over time.
Its syntax supports quantification over both time and value variables, allowing for rich and flexible property specification.
However, this expressiveness comes at a cost: formulas whose satisfaction depends on unbounded future behavior are in general not efficiently monitorable online, which makes the full logic ill suited for practical runtime monitoring.

So far, the semantics of SFO have been defined only in Boolean terms, i.e., a system trajectory either satisfies or violates a specification~\cite{DBLP:conf/emsoft/BakhirkinFHN18}.
In many real-world applications, especially those involving continuous dynamics and numerical quantities, such binary outcomes are often insufficient.
Quantitative semantics, which measure how well or to what extent a specification is satisfied or violated, offer more informative feedback and can significantly improve decision making in practice.

At a high level, we develop the semantic foundations and algorithmic machinery needed for online quantitative monitoring of SFO.
In particular, we make the following contributions:
\begin{enumerate}
    \item \emph{Quantitative semantics for SFO.}
    We introduce a robust quantitative semantics for SFO, generalizing the standard Boolean semantics and enabling quantitative evaluation of real-valued signals.

    \item \emph{Past-time fragment and pastification.}
    We define a syntactic fragment of SFO, called past-time SFO (pSFO), which is well suited for online monitoring.
    To enable its use in practice, we provide a pastification procedure that transforms bounded-response SFO formulas into equisatisfiable pSFO formulas.

    \item \emph{Online monitoring algorithm.}
    We propose an efficient algorithm for online monitoring of pSFO formulas over piecewise-linear signals.
    Our approach computes robustness values symbolically using polyhedral representations.

    \item \emph{Implementation and experiments.}
    We implement our monitoring algorithm in a prototype tool and evaluate its performance on several benchmark scenarios.
    The results demonstrate the effectiveness of our approach.
\end{enumerate}

\paragraph{Related work.}
Signal Temporal Logic (STL) was introduced as a specification language for continuous signals~\cite{DBLP:conf/formats/MalerN04}, and there is now a large body of work on its quantitative semantics and monitoring algorithms.
Robustness-based semantics for STL~\cite{DBLP:journals/tcs/FainekosP09,DBLP:conf/formats/DonzeM10}, based on signed distance to violation, has since been supported by efficient offline and online monitoring algorithms~\cite{DBLP:conf/cav/DonzeFM13,DBLP:journals/fmsd/DeshmukhDGJJS17,DBLP:conf/rv/DokhanchiHF14,DBLP:journals/sttt/MalerN13,DBLP:journals/fmsd/JaksicBGNN18,DBLP:conf/codit/Gol18}.
Our work is complementary to this STL-based line: we lift robustness-based monitoring from STL to full SFO and identify a past-time, bounded-response fragment that still admits efficient online algorithms.

Beyond STL, the work most closely related to ours is Signal First-Order Logic (SFO)~\cite{DBLP:conf/emsoft/BakhirkinFHN18}:
Bakhirkin et al.\ define a Boolean semantics for SFO and an offline monitoring algorithm, but do not release an implementation or report experimental results.
Menghi et al.~\cite{DBLP:conf/sigsoft/MenghiNGB19} introduce Restricted Signal First-Order Logic (RFOL), a syntactic fragment of SFO with quantitative semantics.
RFOL forbids quantification over value variables, enforces that each subformula contains at most one free time variable, and requires all formulas to be closed.
In contrast, we define a generic robustness-based quantitative semantics for \emph{full} SFO and identify a fragment tailored for efficient online monitoring.

The extension of STL with freeze quantifiers (STL*)~\cite{DBLP:journals/iandc/BrimDSV14} can record the current signal value for later comparison.
While STL* can express relative changes such as local extrema and oscillations, SFO is strictly more expressive: it can capture additional classes of signal-based properties, including punctual derivative constraints that STL* cannot express; see~\cite{DBLP:journals/jss/BoufaiedJBBP21} for a detailed taxonomy and an expressiveness comparison of STL, STL*, and SFO.

In a different direction, Bakhirkin and Basset~\cite{DBLP:conf/tacas/BakhirkinB19} propose a quantitative extension of STL with generalized sliding-window and until operators, tailored to express and monitor bounded-stabilization properties.
However, this extension remains quantifier-free: it cannot bind time or value parameters and reuse them later in the formula, so it cannot capture SFO specifications that quantify over dynamically chosen event durations and relate these durations across different signals, such as rise-time properties (e.g., if signal $f$ has a positive edge then signal $g$ subsequently has a positive edge whose rise time is within $10\%$ of that of $f$; see~\cite[Ex.~3]{DBLP:conf/emsoft/BakhirkinFHN18}).
Silvetti et al.~\cite{DBLP:conf/rv/SilvettiLN25} extend this framework with integral-based and filtered sliding-window operators, which strengthens its ability to express cumulative and conditional quantitative requirements (e.g., average power consumption restricted to periods where a mode is active).
These arithmetic extensions are largely orthogonal: SFO extended with suitable arithmetic operators can encode such sliding-window constructs and, in addition, express duration-parameterized properties that these formalisms cannot capture.

Orthogonal to logic-specific robustness monitoring, recent work develops logic-independent notions of quantitative monitorability, characterizing when (and how well) quantitative trace properties can be monitored (or approximated from finite prefixes)~\cite{DBLP:conf/lics/HenzingerS21,DBLP:conf/rv/HenzingerMS22,DBLP:journals/isse/GorostiagaS25}.
Relatedly, quantitative analogs of the safety-liveness dichotomy and corresponding decompositions have been established for quantitative properties and automata~\cite{DBLP:conf/fossacs/HenzingerMS23,DBLP:conf/concur/BokerHMS23,DBLP:journals/lmcs/BokerHMS25}.
These general frameworks provide semantic criteria for what is monitorable (possibly approximately), whereas we give a robustness semantics for full SFO and exact online monitoring algorithms.

To our knowledge, this paper is the first to provide a generic robustness semantics for full SFO, together with an online monitoring algorithm for a practically relevant past-time bounded-response fragment.

\section{Signal First-Order Logic (SFO)}
In this section, we recall Signal First-Order Logic (SFO) and its standard Boolean semantics~\cite{DBLP:conf/emsoft/BakhirkinFHN18}.
A signal is a function $w : \mathbb{T} \rightarrow \mathbb{R}$ mapping a temporal domain $\mathbb{T} \subseteq \mathbb{R}$ to real values.
Let $\mathcal{F}$ be a set of function symbols and $\mathcal{X}=\mathcal{T}\cup \mathcal{R}$ be a set of variables, partitioned into time variables $\mathcal{T}$ and value variables $\mathcal{R}$.
Both time terms $\tau$ and value terms $\rho$ evaluate to real numbers.
A valuation $\nu$ assigns values to variables $x \in \mathcal{X}$, denoted $\sem{x}_\nu$.
Function symbols remain uninterpreted at the logical level; a signal trace $w$ assigns a real-valued signal $\sem{f}_w$ to each function symbol $f \in \mathcal{F}$.

\begin{definition}[Syntax]
  The syntax of an SFO formula $\phi$ is defined by the following grammar:
  $$\begin{aligned}
    \phi   &::= \theta < \theta \mid \neg \phi \mid \phi \lor \phi \mid \exists r \in R.\,\phi \mid\ \exists s \in I.\,\phi
    \quad& \theta &::= \rho \mid \tau\\
    \rho &::= r \mid f(\tau) \mid n \mid \rho - \rho \mid \rho + \rho
    \quad& \tau &::= s \mid n \mid \tau - \tau \mid \tau + \tau
  \end{aligned}$$
  where $R \subseteq \mathbb{R}$ and $I \subseteq \mathbb{R}$ are intervals with rational end points, $n \in \mathbb{Q}$ is a rational number, $r\in \mathcal{R}$ is a value variable, $s\in \mathcal{T}$ is a time variable, and $\rho$ and $\tau$ denote value and time terms, respectively.
\end{definition}

\begin{definition}[Boolean Semantics]
  Consider an SFO formula $\phi$, a trace $w$, a valuation $\nu$, and a term $\theta$. 
  The satisfaction relation $w,\nu\models \phi$ and the value $\sem{\theta}_{w,\nu}$ are defined inductively as follows:
  $$\begin{array}{l@{\qquad}l}
    \begin{aligned}[t]
      w,\nu \models \theta_1 < \theta_2 &\Leftrightarrow \sem{\theta_1}_{w,\nu} < \sem{\theta_2}_{w,\nu} \\
      w,\nu \models \neg \phi &\Leftrightarrow w,\nu \not\models \phi \\
      w,\nu \models \phi_1 \lor \phi_2 &\Leftrightarrow w,\nu \models \phi_1 \text{ or } w,\nu \models \phi_2 \\
      w,\nu \models \exists r \in R.\,\phi &\Leftrightarrow \exists a \in R: w,\nu{[r \leftarrow a]} \models \phi \\
      w,\nu \models \exists s \in I.\,\phi &\Leftrightarrow \exists a \in I: w,\nu{[s \leftarrow a]} \models \phi
    \end{aligned}
    &
    \begin{aligned}[t]
      \sem{n}_{w,\nu} &= n \quad (n \in \mathbb{Q}) \\
      \sem{x}_{w,\nu} &= \nu(x) \quad (x \in \mathcal{X}) \\
      \sem{\theta_1 \pm \theta_2}_{w,\nu} &= \sem{\theta_1}_{w,\nu} \pm \sem{\theta_2}_{w,\nu} \\
      \sem{f(\tau)}_{w,\nu} &= \sem{f}_w\!\bigl(\sem{\tau}_{w,\nu}\bigr) \\
      \end{aligned}
    \end{array}$$
\end{definition}

We use standard syntactic abbreviations: $\phi \land \psi \equiv \neg(\neg \phi \lor \neg \psi)$, $\phi \rightarrow \psi \equiv \neg \phi \lor \psi$, and $\forall x \in S.\, \phi \equiv \neg \exists x \in S.\, \neg \phi$.
Abbreviations for derived comparison operators ($\leq$, $\geq$, $=$) and absolute values ($|\rho|$) are defined as expected.

\begin{remark}[Well-definedness] \label{rem:well-definedness}
    Formulas are interpreted over traces $w$ with a temporal domain $\mathbb T \subseteq \RR$.
    We only define the satisfaction relation $w,\nu \models \phi$ for pairs $(w,\nu)$ for which all signal accesses in $\phi$ are well-defined:
    whenever a term $f(\tau)$ occurs in $\phi$, we require that the interpreted time $\llbracket \tau \rrbracket_{w,\nu}$ lies in $\mathbb T$.
    If this side condition fails, we leave $w,\nu \models \phi$ undefined.
    In particular, quantifiers range over those instantiations that yield well-defined signal accesses in the quantified subformula; if no such instantiation exists, the formula is undefined.
    Whenever we write $w,\nu \models \phi$, we implicitly assume that this side condition holds.
\end{remark}

At the semantic level, a trace $w$ interprets each function symbol $f$ as a concrete real-valued function~$\llbracket f \rrbracket_w$ on its temporal domain $\mathbb{T}$.
We do not impose any additional assumptions on how these functions are represented or how values between samples are obtained; the semantics depends only on the resulting pointwise function values at the times where the formula accesses the signals.

\section{Quantitative Semantics of SFO}
We introduce a quantitative semantics for SFO, assigning a robustness value to each formula relative to a specific signal and valuation.
This score quantifies the degree to which a signal satisfies or violates a formula.
We further establish the soundness of this semantics with respect to the Boolean semantics.

\begin{definition}[Quantitative Semantics]
    Given an SFO formula $\phi$, a trace $w$, and a valuation $\nu$, the \emph{robustness value} $\rob(w,\nu,\phi) \in \RR \cup \{-\infty, +\infty\}$ is defined inductively as follows.
    \begin{align*}
        \rob(w,\nu,\theta_1<\theta_2)&=\sem{\theta_2}_{w,\nu}-\sem{\theta_1}_{w,\nu}\\
        \rob(w,\nu,\neg \phi)&=-\rob(w,\nu,\phi)\\
        \rob(w,\nu,\phi_1\lor\phi_2)&=\max\{\rob(w,\nu,\phi_1),\rob(w,\nu,\phi_2)\}\\
        \rob(w,\nu, \exists r \in R.\phi)&=\sup_{a\in R} \rob(w,\nu[r\gets a],\phi)\\
        \rob(w,\nu, \exists s \in I.\phi)&=\sup_{a\in I}\rob(w,\nu[s\gets a],\phi)
    \end{align*}
\end{definition}

The quantitative semantics follows the well-definedness convention from~\cref{rem:well-definedness}: $\rob(w,\nu,\phi)$ is defined only when all signal accesses needed to evaluate $\phi$ are defined.
In particular, in the quantified clauses, the supremum ranges only over those instantiations for which the quantified subformula has a defined robustness value; if no such instantiation exists, then $\rob(w,\nu,\phi)$ is undefined.

The definition above is independent of any particular interpolation scheme.
When a trace is given as sampled data, an interpolation choice fixes the concrete functions $\sem{f}_w$ on $\mathbb{T}$.
We illustrate the quantitative semantics of SFO using the bounded stabilization example from the introduction, using piecewise-linear interpolation purely for exposition.

\begin{figure}[t]
\centering
\begin{tikzpicture}[scale=0.7]
  \draw[->] (0,0) -- (15.5,0) node[right] {\scriptsize $t$};
  \draw[->] (0,-0.2) -- (0,3.5) node[above] {\scriptsize $f(t)$};

  \draw[step=1cm, very thin, gray!20] (0,-0.1) grid (15.5,3.1);

  \foreach \x/\lab in {0/0,2.5/5,5/10,7.5/15,10/20,12.5/25,15/30}
    \draw (\x,0) -- (\x,-0.12) node[below] {\small \lab};

    \node[left] at (0,1) {\small 1};
    \node[left] at (0,2) {\small 2};
    \node[left] at (0,3) {\small 3};

  \begin{scope}
    \clip (0.5,-0.2) rectangle (11,3.0);
    \draw[dashed] (0,1.0) -- (5.5,1.0);
    \draw[dotted] (0,1.5) -- (5.5,1.5);
    \draw[dotted] (0,0.5) -- (5.5,0.5);
  \end{scope}

  \begin{scope}
    \clip (6,-0.2) rectangle (15,3.0);
    \draw[dashed] (0,2.375) -- (15,2.375);
    \draw[dotted] (0,2.875) -- (15,2.875);
    \draw[dotted] (0,1.875) -- (15,1.875);
  \end{scope}

  \draw[thick]
    (0, 0.5) -- (0.5,0.5) -- (1, 1.5) -- (1.5, 1) -- (6,1)
    -- (7,3) -- (8,0.5) -- (11,3) -- (15,1);

  \node at (0.5,3) {\tiny $\uparrow$};
  \node[above] at (0.5,3) {\phantom{......}\tiny $t_1=1$};
  \node at (6,3) {\tiny $\uparrow$};
  \node[above] at (5.5,3) {\phantom{......}\tiny $t_2=12$};
\end{tikzpicture}
\caption{Illustration of the quantitative semantics for the bounded stabilization property given in \cref{ex:bstab}. For convenience, we use piecewise-linear interpolation. Upward arrows indicate rising edges at $t_1 = 1$ and $t_2 = 12$. Following $t_1$, the signal $f$ stabilizes rapidly within the $\pm 0.5$ tolerance band around $r_1 = 1$. Conversely, following $t_2$, the signal exhibits wide oscillations that violate the band. The quantitative evaluation considers all $8$-unit windows starting within $10$ units of $t_2$, identifying the window with the minimum range to determine the maximal robustness value.}

\label{fig:bstab-signal}
\end{figure}

\begin{example}[Bounded Stabilization] \label{ex:bstab}
    Let $b : \TT \to \{0,1\}$ be a Boolean mode signal and $f : \TT \to \RR$ be a real-valued signal.
    A \emph{rising edge} of $b$ at time $t$ is
    $$\uparrow b[t] \equiv b(t)=1 \land \exists c \in (0,\infty) . \forall c' \in (0, c) . b(t-c') = 0.$$
    Note that since Boolean signals are piecewise constant and right-continuous, isolated single-point spikes do not arise.
    The \emph{bounded stabilization} requirement states that whenever $b$ rises at $t$, the signal $f$ must enter and remain within a tolerance band $\pm 0.5$ around some value $r$.
    This stabilization must occur no later than $c \leq 10$ time units after $t$ and persist for at least $8$ time units:
    $$\phi \equiv \uparrow b[t] \to \exists r \in \RR . \exists c \in [0,10] . \forall d \in [0,8] . |f(t+c+d)-r| \leq 0.5.$$
    
    Quantitative semantics allow for computing the degree to which a signal and valuation satisfy this property.
    For this example, we consider only time points where a rising edge occurs and focus on the consequent of the implication, denoted by $\psi$.
    Given a signal $w$ and a valuation $\nu$, the quantitative semantics $\rob(w,\nu,\psi)$ is given by:
    $$\sup_{r \in \RR} \sup_{c \in [0,10]} \inf_{d \in [0,8]} (0.5 - |f(t+c+d)-r|).$$
    
    This expression captures the optimal robustness over all candidate stabilization levels $r \in \RR$ and settling delays $c \in [0,10]$.
    Specifically, the inner term represents the worst-case deviation from the $\pm 0.5$ band around $r$ over a window of size $8$ starting at $t+c$.
    Intuitively, for a given $t$, we seek a target value $r$ and a delay $c \in [0,10]$ that minimize the signal's deviation from $r$ in the subsequent 8-unit window.
    The robustness of $\psi$ is upper-bounded by $0.5$ (indicating perfect stabilization) and is unbounded from below.

    Consider the signal in \cref{fig:bstab-signal}.
    After the first rising edge at $t_1=1$, the signal stabilizes perfectly around $r_1=1$ within 2 time units.
    Thus, the parameters $r=1$ and $c=2$ (for any $d \in [0,8]$) maximize the robustness value to $0.5$.
    Conversely, after the second rising edge at $t_2=12$, the signal oscillates widely and slowly.
    To evaluate quantitative satisfaction here, we examine windows of size $8$ starting within $10$ time units of $t_2$.
    For each candidate delay $c \in [0,10]$, we consider the segment $[t_2+c,\,t_2+c+8]$ and compute the range of $f$.
    The robustness value for each segment is $0.5 - \tfrac{1}{2}(\max f - \min f)$, and the overall robustness is the supremum over all feasible $c$ and $r$.
    For the signal in \cref{fig:bstab-signal}, the smallest range of $f$ over any such $8$-unit window occurs in the interval $[19,27]$, which places the peak at $t=22$ in the window while the minima occur at the endpoints.
    Over this interval, the minimum and maximum values of $f$ are $1.75$ and $3.0$, respectively, yielding a range of $1.25$.
    Consequently, the optimal stabilization target is $r_2 = (1.75 + 3.0)/2 = 2.375$, and the corresponding robustness is
    $\rob(w,\nu[t \gets 12],\psi) = 0.5 - \frac{1.25}{2} = -0.125.$
    
    This negative value indicates that no matter how parameters $r$ and $c$ are chosen, the signal fails to remain within the $\pm 0.5$ band around any constant level for a full $8$-unit interval starting within $10$ units of the rising edge.
    Intuitively, the robustness value quantifies the signal's overshoot in the ``best'' window: even in the window with minimal range, the signal deviates by $0.125$ outside the allowed band at its worst point.
    This represents the minimal uniform perturbation required to repair the signal to satisfy the property.
    Shifting the maximum down and the minimum up by $0.125$ compresses the range sufficiently to meet the $\pm 0.5$ requirement.
    Equivalently, the existential quantifiers synthesize parameters: for each fixed $c$, the maximizing $r$ is the midrange of $f$ on $[t+c,t+c+8]$, and the maximizing pair $(r^\star,c^\star)$ witnesses the best robustness value~\cite{DBLP:conf/rv/AsarinDMN11}.
\qed\end{example}
    
\begin{remark}[Units and Compatibility]
	Because atomic predicates are interpreted as differences (i.e., $\theta_2 - \theta_1$), each atomic robustness value carries the units of the compared quantity, and we assume that atomic predicates are dimensionally well-formed.
	When a formula combines predicates over signals with different physical units, the $\max$ aggregator in disjunctions compares incompatible quantities, so the resulting robustness value lacks a direct physical interpretation.
	This limitation is inherited from the standard robustness semantics for STL~\cite{DBLP:journals/tcs/FainekosP09,DBLP:conf/formats/DonzeM10} and is not specific to SFO.
	A practical mitigation is to evaluate all predicates on dimensionless quantities, e.g., by normalizing signals and thresholds using known operating bounds.
	Alternatively, one can adopt an explicitly normalized predicate-level score~\cite{DBLP:conf/sigsoft/MenghiNGB19}.
\end{remark}

Finally, we relate the signed robustness in our quantitative semantics to the standard Boolean semantics.

\begin{theorem}[Soundness of Quantitative Semantics] \label{cl:soundness}
    Let $\phi$ be an SFO formula, $w$ a signal trace, and $\nu$ a valuation.
    If $\rob(w, \nu, \phi) > 0$, then $w, \nu \models \phi$.
    If $\rob(w, \nu, \phi) < 0$, then $w, \nu \not\models \phi$.
\end{theorem}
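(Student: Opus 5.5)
We prove both implications simultaneously by structural induction on $\phi$, with the statement quantified over all valuations $\nu$. The induction hypothesis for $\phi$ is: for every $\nu$ for which $\rob(w,\nu,\phi)$ is defined, $\rob(w,\nu,\phi)>0$ implies $w,\nu\models\phi$, and $\rob(w,\nu,\phi)<0$ implies $w,\nu\not\models\phi$. Following \cref{rem:well-definedness}, the robustness value and the satisfaction relation are defined for exactly the same pairs $(w,\nu)$, since both require the same signal accesses. We therefore never compare a defined value with an undefined one.

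\emph{Base case.} For $\theta_1<\theta_2$ we have $\rob=\sem{\theta_2}_{w,\nu}-\sem{\theta_1}_{w,\nu}$. This is positive iff $\sem{\theta_1}_{w,\nu}<\sem{\theta_2}_{w,\nu}$, which is exactly the Boolean clause. If it is negative, then $\sem{\theta_1}_{w,\nu}>\sem{\theta_2}_{w,\nu}$, so the atom fails.

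\emph{Negation.} If $\rob(w,\nu,\neg\phi)>0$, then $\rob(w,\nu,\phi)<0$, and the hypothesis gives $w,\nu\not\models\phi$, i.e.\ $w,\nu\models\neg\phi$. The negative case is symmetric. This is why the two implications must be proved together.

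\emph{Disjunction.} If $\max\{\rob_1,\rob_2\}>0$, some $\rob_i>0$, so the hypothesis gives $w,\nu\models\phi_i$ and hence $w,\nu\models\phi_1\lor\phi_2$. If the max is negative, both $\rob_i<0$, so both disjuncts fail and so does the disjunction.

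\emph{Quantifiers.} The value and time cases are identical, so consider $\exists x\in S.\phi$ with $S$ the interval. Here the supremum ranges only over the set $S'\subseteq S$ of well-defined instantiations, which is also the range of the Boolean quantifier.

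Suppose $\sup_{a\in S'}\rob(w,\nu[x\gets a],\phi)>0$. By the definition of supremum, some $a\in S'$ has $\rob(w,\nu[x\gets a],\phi)>0$. This holds even when the supremum is $+\infty$. Since the hypothesis was stated for all valuations, it applies to $\nu[x\gets a]$ and gives a Boolean witness.

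Suppose instead that the supremum is $<0$. Then every $a\in S'$ satisfies $\rob(w,\nu[x\gets a],\phi)\le\sup<0$, so no instantiation satisfies $\phi$ and the existential fails.

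The main point needing care is this quantifier case. One needs the strict inequality: a supremum equal to $0$ may not be attained, which is exactly why the theorem says nothing at robustness $0$. One also needs values in $\RR\cup\{\pm\infty\}$, so the atoms and negation must be read in the extended reals; there $-(\pm\infty)=\mp\infty$, and an $-\infty$ supremum still forces all instances to be negative.

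The derived operators $\land,\to,\forall$ need no separate treatment, because they are abbreviations and are covered by the cases above.
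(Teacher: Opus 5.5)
Your proof is correct and follows essentially the same argument as the paper. Both use a simultaneous structural induction over all valuations, with the same atomic, negation and disjunction cases, and treat value and time quantifiers uniformly, taking the supremum over the set of well-defined instantiations; your remarks on extended-real values (a supremum of $+\infty$ or $-\infty$) and on the derived operators only make explicit what the paper leaves implicit.
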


\section{Past-Time SFO (pSFO)}

Online monitoring is simplest for specifications whose evaluation at time $t$ depends only on the prefix observed up to $t$.
Past-time specifications have this property, so a monitor can emit verdicts (or robustness values) immediately, without buffering future samples.
In this section, we isolate a past-time fragment of SFO and provide a procedure to translate SFO formulas with bounded future lookahead to pSFO formulas.

We call an SFO formula $\phi$ \emph{temporal} if it has exactly one free time variable $t$, and every signal access is anchored at $t$:
whenever $f(\tau)$ occurs in $\phi$, the time term $\tau$ contains $t$ with coefficient $1$ (equivalently, $\tau$ can be rewritten as $t+\delta$ where $t$ does not occur free in $\delta$).
This excludes references to fixed absolute times and ensures that each signal access is at a well-defined offset from the reference time~$t$.

\begin{definition}[Access time terms and pSFO]
	Let $\phi$ be a temporal SFO formula with unique free time variable $t$.
	Let $M_\phi$ be the set of time terms $\tau$ such that some occurrence $f(\tau)$ appears in $\phi$.
	For $\tau \in M_\phi$ and valuation $\nu$, define the \emph{offset} of $\tau$ relative to~$t$ by
	$\mathit{off}_\nu(\tau) = \sem{\tau}_{\nu[t \gets 0]}$.
	We call $\phi$ a \emph{past-time SFO} (pSFO) formula if for every $\tau \in M_\phi$ and every valuation $\nu$ of bound time variables consistent with their quantification intervals (evaluated relative to $t=0$), we have $\mathit{off}_\nu(\tau) \leq 0$.
\end{definition}

Intuitively, $\mathit{off}_\nu(\tau)$ is the time shift encoded by $\tau$ relative to the reference time $t$.
The pSFO condition says that every signal read happens at time $t+\mathit{off}_\nu(\tau) \le t$, i.e., never strictly in the future.

The \emph{bounded-response} fragment of SFO consists of formulas where every time-quantifier interval $I$ in $\exists s \in I.\,\psi$ is bounded, i.e., $\inf I \neq -\infty$ and $\sup I \neq \infty$.
In this fragment, the offsets induced by bound time variables range over bounded sets, so the formula's time dependence can be summarized by finite lookahead and history bounds.

Below, we provide a pastification procedure for temporal bounded-response SFO formulas.
Note that this does not preclude unbounded time entirely: the free time variable $t$ ranges over the entire temporal domain, whereas quantified times act as \emph{relative offsets} and are constrained to bounded ranges.
This allows the expression of safety (and co-safety) properties whose violations (and satisfactions) are witnessed by input segments of bounded length, as is common in practice (e.g., when enforcing deadlines).

The key quantity for pastification is the \emph{forward horizon}, an upper bound on how far into the future the formula may read relative to $t$.
Dually, the \emph{backward horizon} bounds how far into the past it may read.
For bounded-response formulas, both quantities are finite.

\begin{definition}[Horizons]
	Let $\phi$ be a temporal bounded-response formula.
	For each $\tau \in M_\phi$, let $\ell_\tau = \inf_\nu \mathit{off}_\nu(\tau)$ and $u_\tau = \sup_\nu \mathit{off}_\nu(\tau)$, where $\nu$ ranges over valuations of the bound time variables consistent with their quantification intervals.
	We define the \emph{forward} and \emph{backward} horizons, respectively, as $H^+(\phi) = \max_{\tau \in M_\phi} \max(0, u_\tau)$ and $H^-(\phi) = \max_{\tau \in M_\phi} \max(0, -\ell_\tau)$, with the convention that $H^+(\phi)=H^-(\phi)=0$ when $M_\phi=\emptyset$.
\end{definition}

Thus, $H^+(\phi)$ is the maximal future offset of any signal access in $\phi$, and $H^-(\phi)$ is the maximal history length needed to evaluate $\phi$ at a given reference time.
We now define the pastification operator.
Without loss of generality, we assume that bound variables are named apart from free variables.

\begin{definition}[Pastification]
	For $d \geq 0$, the pastification of $\phi$ is $\Pi_d(\phi) = \phi[t \mapsto t-d]$, obtained by substituting every free occurrence of $t$ by $(t-d)$.
\end{definition}

Pastification turns future signal reads into past reads by shifting the reference time backwards inside the formula.
Concretely, if a signal access is at $t+\delta$ in $\phi$, then it becomes $(t-d)+\delta = t + (\delta-d)$ in $\Pi_d(\phi)$; choosing $d$ large enough makes all these new offsets non-positive.

\begin{example}[Pastification for bounded stabilization]\label{ex:past}
	Recall the bounded stabilization property from \cref{ex:bstab}.
	The rising-edge predicate $\uparrow b[t]$ contains an unbounded time quantifier, so $\phi$ is not bounded-response.
	However, the consequent
	$
	\psi \equiv \exists r \in \RR.\ \exists c\in[0,10].\ \forall d\in[0,8].\ |f(t+c+d)-r|\le 0.5
	$
	is bounded-response and can be pastified.
	Here $M_\psi = \{\,t+c+d\,\}$, and for any valuation $\nu$ consistent with $c\in[0,10]$ and $d\in[0,8]$ we have
	$
	\mathit{off}_\nu(t+c+d) = \sem{t+c+d}_{\nu[t\gets 0]} = \nu(c)+\nu(d).
	$
	Hence $\ell_{t+c+d}=0$ and $u_{t+c+d}=18$, so $H^+(\psi)=18$ (and $H^-(\psi)=0$).
	Applying $\Pi_{18}$ yields
	$$
	\Pi_{18}(\psi) \equiv \exists r\in\RR.\ \exists c\in[0,10].\ \forall d\in[0,8].\ |f(t-18+c+d)-r|\le 0.5.
	$$
	Now every access time has offset
	$
	\mathit{off}_\nu(t-18+c+d) = \nu(c)+\nu(d)-18 \le 0,
	$
	so $\Pi_{18}(\psi)$ is a pSFO formula.
	Evaluating $\Pi_{18}(\psi)$ at time $t$ only reads $f$ at times up to $t$, whereas evaluating $\psi$ at time $t$ would need $f$ up to $t+18$.
	Moreover, $\psi$ and $\Pi_{18}(\psi)$ are equisatisfiable via the reference-time shift $t \mapsto t+18$.
	\qed
\end{example}

Choosing $d$ to be at least the forward horizon of the input ensures that all time references lie in the past, i.e., a pSFO formula.
Finally, we establish that pastification preserves satisfiability up to a constant shift of the reference time.

\begin{theorem}[Equisatisfiability under Pastification]
	\label{cl:past-equiv}
	Let $\phi$ be a temporal bounded-response SFO formula with free time variable $t$, and let $h\ge H^+(\phi)$.
	For every trace $w$ and valuation $\nu$, we have $w,\nu \models \phi$ iff $w,\nu' \models \Pi_h(\phi)$, where $\nu'(t)=\nu(t)+h$ and $\nu'(x)=\nu(x)$ for all $x\neq t$.
\end{theorem}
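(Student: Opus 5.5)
The argument is a substitution lemma proved by structural induction. Write $\Pi_h(\phi) = \phi[t \mapsto t-h]$. The substituted term $t-h$ evaluates under $\nu'$ to $\nu'(t)-h = \nu(t)$, which is exactly the value of $t$ under $\nu$. So we expect every subterm and subformula of $\Pi_h(\phi)$ to behave under $\nu'$ exactly as the original does under $\nu$.

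We first prove a term-level claim. For every term $\theta$ (time or value) in which $t$ may occur free, and every valuation $\mu$ with $\mu'(t)=\mu(t)+h$ and $\mu'(x)=\mu(x)$ for $x\neq t$, we have $\sem{\theta[t\mapsto t-h]}_{w,\mu'} = \sem{\theta}_{w,\mu}$. The proof is by induction on $\theta$:
\begin{itemize}
\item For $\theta = t$, we get $\mu'(t)-h=\mu(t)$.
\item For any other variable, a constant $n$, or a sum or difference, the claim is immediate from the induction hypothesis.
\item For $f(\tau)$, the induction hypothesis gives that the interpreted access time $\sem{\tau[t\mapsto t-h]}_{w,\mu'}$ equals $\sem{\tau}_{w,\mu}$. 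Hence the same point of $\sem{f}_w$ is read.
\end{itemize}
The last case also shows that the access lies in $\TT$ on one side iff it does on the other. So the well-definedness side condition of \cref{rem:well-definedness} transfers in both directions.

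We then lift this to formulas. By induction on subformulas $\psi$ of $\phi$, for all valuations $\mu,\mu'$ related as above, we show: $w,\mu\models\psi$ iff $w,\mu'\models\psi[t\mapsto t-h]$, with one side defined iff the other is.
\begin{itemize}
\item The atomic case $\theta_1<\theta_2$ follows from the term claim.
\item Negation and disjunction are immediate.
\item For a quantifier $\exists x\in S.\,\psi_1$, bound variables are named apart from $t$, so substitution commutes with the binder, and the interval $S$ contains no free $t$ (it has rational endpoints). For each $a\in S$, the valuations $\mu[x\gets a]$ and $\mu'[x\gets a]$ are again related as required, since $x\neq t$. The induction hypothesis then gives a bijection, namely the identity on $a$, between witnesses that are well-defined and satisfying. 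This handles both existence of a witness and the convention that quantifiers range only over well-defined instantiations.
\end{itemize}
Instantiating $\mu=\nu$, $\mu'=\nu'$, $\psi=\phi$ yields the theorem.

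The step needing the most care is the quantifier case together with well-definedness. One must check that the set of admissible instantiations is the same on both sides, and that $t$ is never captured. Both reduce to the term claim and the named-apart assumption.

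The proof does not use the hypothesis $h\ge H^+(\phi)$, nor bounded response. These only guarantee that $\Pi_h(\phi)$ is pSFO, which we can remark on separately: every offset becomes $\mathit{off}_\nu(\tau)-h\le u_\tau-h\le 0$. The same induction, with $\max$ and $\sup$ in place of $\lor$ and $\exists$, also shows $\rob(w,\nu,\phi)=\rob(w,\nu',\Pi_h(\phi))$, which we may note as a corollary.
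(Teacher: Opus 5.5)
Your proposal is correct and follows essentially the same route as the paper: a term-level substitution lemma $\sem{\Pi_h(\theta)}_{w,\nu'}=\sem{\theta}_{w,\nu}$ proved by structural induction, which also transfers well-definedness of signal accesses. This is then lifted to formulas by structural induction, with the quantifier case resting on $(\nu[x\gets a])'=\nu'[x\gets a]$ for $x\neq t$. As you observe, the hypothesis $h\ge H^+(\phi)$ is not needed for the equivalence itself (the paper's proof does not use it either); it only guarantees that $\Pi_h(\phi)$ is a pSFO formula.
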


\section{Online Quantitative Monitoring of pSFO}
\label{sec:online}

We present an online quantitative monitoring algorithm for bounded-response pSFO formulas, which capture a broad class of safety and co-safety properties.
For any such formula, the monitor operates over a sliding window corresponding to the formula's backward horizon.
As execution progresses, robustness values are updated incrementally using exclusively the data within this window.
The construction is compositional, enabling the aggregation of monitors for pSFO subformulas while enforcing a bounded memory footprint.
Furthermore, the outputs of these monitors can be combined via Boolean operations, extending the monitoring capability beyond safety and co-safety.

Following the well-definedness convention of~\cref{rem:well-definedness}, the algorithm evaluates robustness only over valuations for which all signal accesses fall within the available trace domain; quantifier ranges that partly exceed the domain are effectively trimmed to the well-defined region.

When $\phi$ is a temporal formula with a unique free time variable $t$,
we write
$\rob(w,t_0,\phi)$ for $\rob\bigl(w,\nu[t\gets t_0],\phi\bigr)$ to denote the \emph{pointwise robustness} of $\phi$ at time $t_0\in\mathbb{T}$,
where $\nu$ is any valuation of the remaining (non-temporal) free variables of $\phi$; if $t$ is the only free variable, the choice of $\nu$ is irrelevant.
Analogously, we write $w,t_0\models\phi$ for $w,\nu[t\gets t_0]\models\phi$.
Since monitoring evaluates a formula at each time point along a trace,
this pointwise view is the one we adopt throughout.

\begin{definition}[Online Monitoring Problem]
	The \emph{online monitoring problem} for a temporal pSFO formula $\phi$ over a stream $w_1, w_2, \ldots$ of input segments requires producing an output segment $v_i$ for each $w_i$.
	In \emph{Boolean monitoring}, $v_i$ is a Boolean segment indicating whether $\phi$ holds pointwise.
	In \emph{quantitative monitoring}, $v_i$ is an extended-real-valued segment providing the pointwise robustness of $\phi$, that is, a partial function $t \mapsto \rob(w,t,\phi) \in \mathbb{R}\cup\{-\infty,+\infty\}$ where defined.
\end{definition}

The core idea of the algorithm is to represent signals, terms, and formulas geometrically as collections of convex polyhedra defined by linear constraints.
Each signal is encoded as a list of polyhedra over time and value variables; arithmetic operations translate into intersections and projections on these polyhedra; and Boolean connectives become polyhedral case-splits.
This geometric perspective, which enables Boolean monitoring~\cite{DBLP:conf/emsoft/BakhirkinFHN18}, forms the foundation for our quantitative extension: by maintaining an auxiliary variable that tracks the robustness of each subformula, we extract quantitative information directly from the polyhedral representation.

Throughout this section, we assume the input signal is sampled at regular time intervals and linearly interpolated between samples.
The assumption of regular sampling is a presentational convenience: the polyhedral operations do not require uniform sampling, and extending to irregular time steps would only affect bookkeeping, not the algorithms themselves.
Although uniform sampling simplifies certain specifications (for instance, property~P2 in the obstacle avoidance experiments in \cref{sec:experiments} approximates a derivative using the fixed step size), SFO and our algorithm could be adapted to monitor such properties without this assumption.

We denote by $\sqcup$, $\sqcap$, $\textsc{Negate}$, and $\textsc{Eliminate}$ the standard operations of union, intersection, complement, and projection, applied to individual polyhedra or lists thereof.
Lists of polyhedra are maintained in increasing order of the upper bound of the free time variable~$t$.
The intersection operation denotes intersections between (i) two polyhedra, (ii) a polyhedron and a list (pairwise), and (iii) two ordered lists of polyhedra.
These operations are efficiently implemented via the double description method~\cite{Motzkin1953DD,DBLP:conf/cococ/FukudaP95}.

Each function $f$ is interpreted as a signal, represented by a list of convex polyhedra $\P_f$ parameterized by two free variables: $t_f$ (time) and $v_f$ (the value of $f$).
These polyhedra are updated upon the arrival of new signal segments; consequently, we treat them as read-only global variables within helper functions.

Finally, we assume that input pSFO formulas are given in a normalized form: all first-order quantifiers are pushed inward as far as possible, and derived propositional connectives are eliminated; negations are pushed inward through propositional structure as far as possible, without pushing them across quantifiers.
This preprocessing preserves the semantics and simplifies the monitoring construction, since quantified variables then appear in local scopes and each quantifier block can be handled independently.
The rewriting is linear in formula size, and pushing quantifiers inward only narrows their scope to the subformulas that mention the bound variable, without duplicating any subformula, so normalization does not increase formula size.

\begin{figure*}[p!]
	\centering
	\scalebox{0.82}{%
		\begin{minipage}[t]{0.6\textwidth}
			\captionsetup{type=algorithm}
			\captionof{algorithm}{\textsc{Monitor}}
			\label{alg:monitoring}
			\parbox{\linewidth}{%
				\begin{algorithmic}[1]
					\Require pSFO formula $\phi$ with backward horizon $h$; sampling period $\Delta$;
					initial signal values $w_0$; stream of signal segments $w_1, w_2, \ldots$
					\Ensure stream of segments $v_1, v_2, \ldots$ encoding the robustness of $w$ for $\phi$
					\vspace{0.4em}
					\State $\P_f \gets \{\text{constraints from } w_0\}$ \text{for all $f \in \mathcal{F}$}
					\For{$i \geq 1$}
					\State Receive $w_i$
					\State $\P_f \gets \P_f \sqcup \{\text{constraints from } w_i\}$ \text{for all $f \in \mathcal{F}$}
					\State $P_\dom \gets \{\, t \in [(i-1)\Delta, i\Delta) \,\}$
					\State $(\P, v) \gets \textsc{FormulaRobust}(\phi, P_\dom)$
					\State Output $v_i \gets (\P, v)$
					\State Drop from $\P_f$ all signal pieces ending before $i\Delta - h$ \text{for all $f \in \mathcal{F}$}
					\EndFor
				\end{algorithmic}
			}%
		\end{minipage}%
	}%
	\hfill
	\scalebox{0.82}{%
		\begin{minipage}[t]{0.6\textwidth}
			\setcounter{algorithm}{3}
			\captionsetup{type=algorithm}
			\captionof{algorithm}{\textsc{EliminateBySup}}
			\label{alg:eliminatebysup}
			\parbox{\linewidth}{%
				\begin{algorithmic}[1]
					\Require list of polyhedra $\P$; current robustness variable $v$; quantified variable $x$; interval $I \subseteq \mathbb{R}$
					\Ensure list of polyhedra $\P'$ (without $x$) and robustness variable $v'$ for the quantified formula on the current segment
					\vspace{0.4em}
					\State $\P' \gets \emptyset$
					\State $v' \gets \textsc{FreshVar}()$
					\ForAll{$P \in \P$}
					\State $P_I \gets P \sqcap \{ x \in I \}$
					\State $({P}_I', o) \gets \textsc{NormalizeObj}(P_I, v, x)$
					\State $(\Q, v') \gets \textsc{PlpMaximize}({P}_I', o, x, v')$
					\State $\P' \gets \P' \sqcup \Q$
					\EndFor
					\State $\P' \gets \{ (Y, v') \in \P' \mid \lnot \exists \hat{v} : (Y, \hat{v}) \in \P' \land \hat{v} > v' \}$
					\State \Return $(\P', v')$
				\end{algorithmic}
			}%
		\end{minipage}%
	}%
	
	\makebox[\textwidth][l]{%
		\scalebox{\algscale}{%
			\begin{minipage}[t]{\algwidth}
				\setcounter{algorithm}{1}
				\captionsetup{type=algorithm}
				\captionof{algorithm}{\textsc{FormulaRobust}}
				\label{alg:formularobust}
				\textbf{Require:} pSFO formula $\phi$; domain-constraints $P_\dom$ for the current segment\\
				\textbf{Ensure:} list of polyhedra $\P$ and robustness variable $v$ for a subformula on the current segment
				
				\vspace{0.4em}
				\begin{minipage}[t]{0.6\textwidth}
					\parbox{\linewidth}{%
						\begin{algorithmic}[1]
							\If{$\phi \equiv (\theta_1 < \theta_2)$}
							\State $(\P', v') \gets \textsc{Term}(\theta_2 - \theta_1, P_\dom)$
							\State \Return $(\P', v')$
							\ElsIf{$\phi \equiv \neg \phi'$}
							\State $(\P', v') \gets \textsc{FormulaRobust}(\phi', P_\dom)$
							\State $v'' \gets \textsc{FreshVar}()$
							\State $\P'' \gets \P' \sqcap \{\, v'' = -v' \,\}$
							\State \Return $(\textsc{Eliminate}(\{v'\}, \P''), v'')$
							\ElsIf{$\phi \equiv \exists r \in R : \phi'$}
							\State $(\P', v') \gets \textsc{FormulaRobust}(\phi', P_\dom \sqcap \{r \in R\})$
							\State $(\P'', v'') \gets \textsc{EliminateBySup}(\P', v', r, R)$
							\State \Return $(\P'', v'')$
							\algstore{frstore}
						\end{algorithmic}
					}%
				\end{minipage}\hspace{-2em}%
				\begin{minipage}[t]{0.6\textwidth}
					\parbox{\linewidth}{%
						\begin{algorithmic}[1]
							\algrestore{frstore}
							\ElsIf{$\phi \equiv \exists s \in I : \phi'$}
							\State $(\P', v') \gets \textsc{FormulaRobust}(\phi', P_\dom \sqcap \{s \in I\})$
							\State $(\P'', v'') \gets \textsc{EliminateBySup}(\P', v', s, I)$
							\State \Return $(\P'', v'')$
							\ElsIf{$\phi \equiv \phi_1 \lor \phi_2$}
							\State $(\P_1, v_1) \gets \textsc{FormulaRobust}(\phi_1, P_\dom)$
							\State $(\P_2, v_2) \gets \textsc{FormulaRobust}(\phi_2, P_\dom)$
							\State $v' \gets \textsc{FreshVar}()$
							\State $\P_{\max}^1 \gets \P_1 \sqcap \P_2 \sqcap \{\, v_1 \geq v_2,\, v' = v_1 \,\}$
							\State $\P_{\max}^2 \gets \P_1 \sqcap \P_2 \sqcap \{\, v_1 <  v_2,\, v' = v_2 \,\}$
							\State $\P' \gets \textsc{Eliminate}\bigl(\{v_1, v_2\},\,
							\P_{\max}^1 \sqcup \P_{\max}^2 \bigr)$
							\State \Return $(\P', v')$
							\EndIf
						\end{algorithmic}
					}%
				\end{minipage}
			\end{minipage}%
		}%
	}
	
	\makebox[\textwidth][l]{%
		\scalebox{\algscale}{%
			\begin{minipage}[t]{\algwidth}
				\setcounter{algorithm}{2}
				\captionsetup{type=algorithm}
				\captionof{algorithm}{\textsc{Term}}
				\label{alg:term}
				\textbf{Require:} a term $\theta$, a polyhedron $P_\dom$ collecting domain constraints\\
				\textbf{Ensure:} list of polyhedra $\P$ and robustness variable $v$ for the term on the current segment
				
				\vspace{0.4em}
				\begin{minipage}[t]{0.6\textwidth}
					\parbox{\linewidth}{%
						\begin{algorithmic}[1]
							\If{$\theta \equiv n$}
							\State $v \gets \textsc{FreshVar}()$
							\State \Return $(\{\, v = n \,\} \sqcap P_\dom, v)$
							\ElsIf{$\theta \equiv r$}
							\State $v \gets \textsc{FreshVar}()$
							\State \Return $(\{\, v = r \,\} \sqcap P_\dom, v)$
							\ElsIf{$\theta \equiv s$}
							\State $v \gets \textsc{FreshVar}()$
							\State \Return $(\{\, v = s \,\} \sqcap P_\dom, v)$
							\algstore{termstore}
						\end{algorithmic}
					}%
				\end{minipage}\hspace{-2em}%
				\begin{minipage}[t]{0.6\textwidth}
					\parbox{\linewidth}{%
						\begin{algorithmic}[1]
							\algrestore{termstore}
							\ElsIf{$\theta \equiv \theta_1 \pm \theta_2$}
							\State $(\P_1, v_1) \gets \textsc{Term}(\theta_1, P_\dom)$
							\State $(\P_2, v_2) \gets \textsc{Term}(\theta_2, P_\dom)$
							\State $v \gets \textsc{FreshVar}()$
							\State $\P \gets (\P_1 \sqcap \P_2) \sqcap \{\, v = v_1 \pm v_2 \,\} \sqcap P_\dom$
							\State \Return $(\textsc{Eliminate}(\{v_1, v_2\}, \P), v)$
							\ElsIf{$\theta \equiv f(\tau)$}
							\State $v \gets \textsc{FreshVar}()$
							\State \Return $(\P_f[t_f \mapsto \tau,\, v_f \mapsto v] \sqcap P_\dom, v)$
							\EndIf
						\end{algorithmic}
					}%
				\end{minipage}
			\end{minipage}%
		}%
	}
	
	\makebox[\textwidth][l]{%
		\scalebox{\algscale}{%
			\begin{minipage}[t]{\algwidth}
				\setcounter{algorithm}{4}
				\captionsetup{type=algorithm}
				\captionof{algorithm}{\textsc{PlpMaximize}}
				\label{alg:parametriclpmaximize}
				\textbf{Require:} polyhedron $P$ over $Y$ and $x$; affine objective $o$ over $(Y,x)$;
				quantified variable $x$; robustness $v'$\\
				\textbf{Ensure:} list of polyhedra $\Q$ over $Y$ with
				$v'(y)=\sup\{o(y,x)\mid (y,x)\in P\}$
				
				\vspace{0.4em}
				\begin{minipage}[t]{0.6\textwidth}
					\parbox{\linewidth}{%
						\begin{algorithmic}[1]
							\State $(\alpha,\beta) \gets \textsc{SplitCoeff}(o,x)$
							\State $(\mathcal{L},\mathcal{U},P_0) \gets \textsc{IsolateBounds}(P,x)$
							\State $P_Y \gets \textsc{Eliminate}(\{x\}, P)$; \quad $\Q \gets \emptyset$
							\State $G^+ \gets P_0 \sqcap \{\alpha > 0\}$
							\State $G^- \gets P_0 \sqcap \{\alpha < 0\}$
							\State $G^0 \gets P_0 \sqcap \{\alpha = 0\}$
							
							\If{$G^+ \neq \emptyset$}
							\If{$\mathcal{U} = \emptyset$}
							\State $\Q \gets \Q \sqcup (G^+ \sqcap P_Y \sqcap \{ v' = +\infty \})$
							\Else
							\ForAll{$u \in \mathcal{U}$}
							\State $A_u \gets \bigsqcap_{u' \in \mathcal{U}} \{ u \leq u' \}$
							\State $F_u \gets \bigsqcap_{\ell \in \mathcal{L}} \{ \ell \leq u \}$
							\State $\Q \gets \Q \sqcup
							(G^+ \sqcap A_u \sqcap F_u \sqcap P_Y \sqcap
							\{ v' = \alpha u + \beta \})$
							\EndFor
							\EndIf
							\EndIf
							\algstore{plpmax}
						\end{algorithmic}
					}%
				\end{minipage}\hspace{-2em}%
				\begin{minipage}[t]{0.6\textwidth}
					\parbox{\linewidth}{%
						\begin{algorithmic}[1]
							\algrestore{plpmax}
							\If{$G^- \neq \emptyset$}
							\If{$\mathcal{L} = \emptyset$}
							\State $\Q \gets \Q \sqcup (G^- \sqcap P_Y \sqcap \{ v' = +\infty \})$
							\Else
							\ForAll{$\ell \in \mathcal{L}$}
							\State $A_\ell \gets \bigsqcap_{\ell' \in \mathcal{L}} \{ \ell \geq \ell' \}$
							\State $F_\ell \gets \bigsqcap_{u \in \mathcal{U}} \{ \ell \leq u \}$
							\State $\Q \gets \Q \sqcup
							(G^- \sqcap A_\ell \sqcap F_\ell \sqcap P_Y \sqcap
							\{ v' = \alpha \ell + \beta \})$
							\EndFor
							\EndIf
							\EndIf
							
							\If{$G^0 \neq \emptyset$}
							\State $\Q \gets \Q \sqcup (G^0 \sqcap P_Y \sqcap \{ v' = \beta \})$
							\EndIf
							
							\State \Return $(\Q, v')$
						\end{algorithmic}
					}%
				\end{minipage}
			\end{minipage}%
		}%
	}
\end{figure*}

\textbf{\textsc{Monitor} (\cref{alg:monitoring})} outlines the high-level procedure.
The monitor maintains a history of signal polyhedra $\P_f$.
Upon receiving segment $w_i$, it appends constraints to $\P_f$, restricts the reference time to the current segment by constructing $P_\dom$, and calls \textsc{FormulaRobust}.
The result is a list of polyhedra $\P$ together with a distinguished robustness variable $v$; this pair encodes the piecewise-affine robustness function on the current segment.
Finally, using the finite backward horizon $h = H^-(\phi)$, the monitor discards signal pieces whose (right) endpoint is strictly smaller than $i\Delta - h$, which are never needed to evaluate $\phi$ for any $t \in [(i-1)\Delta, i\Delta)$.

\textbf{\textsc{FormulaRobust} (\cref{alg:formularobust})} recursively constructs polyhedra for subformulas, mapping logical operators to linear constraints.
\emph{Negation} introduces a fresh output variable and enforces $v''=-v'$, then immediately projects out the old robustness variable; this keeps the representation functional (one designated robustness variable per returned subformula piece).
\emph{Disjunction} encodes the $\max$ operator by case-splitting on $v_1 \ge v_2$ versus $v_1 < v_2$ over the intersection of the two operand regions.
\emph{Existential quantification} computes the supremum of the robustness over the quantified variable's interval via \textsc{EliminateBySup}.

\textbf{\textsc{Term} (\cref{alg:term})} transforms arithmetic terms into polyhedra annotated with a value variable.
In all base cases (constants and variables), the returned polyhedron is conjoined with $P_\dom$ so that any currently active domain constraints (e.g., quantifier bounds) are preserved.
Sums and differences intersect subterm regions and add the defining equality for the output variable, then project away intermediate variables.
For function terms $f(\tau)$, we substitute $\tau$ for the signal time variable $t_f$ and constrain the signal value variable $v_f$ to equal the fresh output variable.

\textbf{\textsc{EliminateBySup} (\cref{alg:eliminatebysup})} eliminates a quantified variable $x$ by taking the supremum of the current robustness variable $v$ over an interval $I$.
It first restricts to the closure $[\inf I,\sup I]$, which does not affect the supremum.
Then, it applies a normalization step: $\textsc{NormalizeObj}(P_I,v,x)$ extracts from $P_I$ the affine expression $o(Y,x)$ that equals $v$ on that piece and substitutes it to eliminate the robustness variable, yielding a reduced polyhedron ${P}_I'$ over $(Y,x)$ together with the explicit objective~$o$.
The reduced polyhedron and explicit objective are then passed to \textsc{PlpMaximize}.
Finally, the last line removes dominated points so that, for each parameter valuation $Y$, only maximal $v'$ values remain; this implements the supremum over a union of pieces as the pointwise maximum of the per-piece suprema.

\textbf{\textsc{PlpMaximize} (\cref{alg:parametriclpmaximize})} performs parametric maximization of an affine objective $o=\alpha x+\beta$ over the feasible $x$-interval induced by a polyhedron $P(Y,x)$.
The optimum is achieved at an extremal bound of $x$ determined by the sign of $\alpha$; if the relevant side is unbounded, the supremum is $+\infty$.
If for some $y$ there is no feasible $x$ (i.e., the section $P_y$ is empty), then $y\notin P_Y$ and the procedure produces no output polyhedron at~$y$, leaving the robustness undefined there (cf. \cref{rem:well-definedness}).

\emph{Lines 1--6.}
$\textsc{SplitCoeff}(o,x)$ decomposes the affine objective into $o=\alpha x+\beta$, where $\alpha$ and $\beta$ are affine in the parameters $Y$.
$\textsc{IsolateBounds}(P,x)$ extracts the affine lower bounds $\mathcal{L}$ ($x \geq \ell(Y)$), upper bounds $\mathcal{U}$ ($x \leq u(Y)$), and the $x$-independent guard $P_0$.
Projecting out $x$ yields $P_Y=\{y \mid \exists x . (y,x) \in P \}$, defining parameter regions where the feasible $x$-interval is nonempty.
The algorithm partitions the parameter space by the sign of $\alpha$:
$G^+$ (positive slope, optimum at an upper bound), $G^-$ (negative slope, optimum at a lower bound), and $G^0$ (zero slope).

\emph{Lines 7--14 ($\alpha > 0$).}
For positive slopes, the supremum occurs at the maximal feasible $x$.
If $\mathcal{U} = \emptyset$, the feasible region is unbounded above and the objective diverges to $+\infty$, yielding $v' = +\infty$ on $G^+ \sqcap P_Y$.
Otherwise, for each candidate $u \in \mathcal{U}$, we isolate the region where $u$ is the tightest upper bound ($A_u$) and feasible w.r.t.\ all lower bounds ($F_u$).
On this region the supremum is $v' = \alpha u + \beta$.

\emph{Lines 15--24 ($\alpha < 0$ and $\alpha = 0$).}
Symmetrically, for negative slopes the supremum occurs at the minimal feasible $x$.
If $\mathcal{L} = \emptyset$, the feasible region is unbounded below and the objective diverges to $+\infty$ as $x \to -\infty$, yielding $v' = +\infty$ on $G^- \sqcap P_Y$.
Otherwise, for each $\ell \in \mathcal{L}$, the region $A_\ell \sqcap F_\ell$ isolates the tightest feasible lower bound, and the supremum is $v' = \alpha \ell + \beta$.
Finally, for $\alpha=0$, the objective is independent of $x$, so $v'=\beta$ on $G^0 \sqcap P_Y$.

\begin{example}[Supremum Elimination]	\label{ex:poly-sup}
	Let $\psi \equiv \exists c \in [0,2] \,.\,  0 < f(t-c)$ and consider the samples $f(0)=0$, $f(1)=-3$, $f(2)=-1$, $f(3)=1$ with piecewise-linear interpolation:
	$f(s)$ is given by $-3s$ for $s \in [0,1)$, and by $2s-5$ for $s \in [1,2)$ and $s \in [2,3)$.
	We encode $f$ as the list $\P_f$ of convex polyhedra over $(t_f,v_f)$:
	$P_f^{(1)} =\{ 0 \leq t_f < 1, v_f = -3t_f \}$, $P_f^{(2)} =\{ 1 \leq t_f < 2, v_f = 2t_f - 5 \}$, and finally $P_f^{(3)} =\{ 2 \leq t_f < 3, v_f = 2t_f - 5 \}$.
	We monitor on the current segment with bounds $ \P_{\dom} = \{ 2 \leq t < 3, 0 \leq c \leq 2 \}. $
	Applying \textsc{Term} to the atom $0<f(t-c)$ substitutes $\tau \coloneqq t-c$ for $t_f$ and introduces a fresh value variable $v$. 
	For each piece of $\P_f$, this produces constraints in $(t,c,v)$, which we intersect with $\P_{\dom}$ and simplify for $t\in[2,3)$:
	\begin{align*}
		\text{(A)}\;\; & 2 \leq t < 3,\ t-1 \leq c \leq 2,\ v = 3c - 3t \\
		\text{(B)}\;\; & 2 \leq t < 3,\ t-2 \leq c \leq t-1,\ v = -2c + 2t - 5,\\
		\text{(C)}\;\; & 2 \leq t < 3,\ 0 \leq c \leq t-2,\ v = -2c + 2t - 5.
	\end{align*}
	Hence, \textsc{FormulaRobust} returns their union with $v$ as the atomic robustness.
	
	The quantifier $\exists c \in [0,2]$ is handled by \textsc{EliminateBySup}, which first conjoins $c\in[0,2]$ and then normalizes each piece:
	$\textsc{NormalizeObj}$ extracts the affine objective $o(t,c)$ (here, simply $o=v$ as given by the displayed equalities) and projects out $v$.
	Then \textsc{PlpMaximize} writes $o=\alpha c+\beta$ (via \textsc{SplitCoeff}) and isolates the affine lower and upper bounds on $c$.
	
	For (A) we have $t-1 \leq c \leq 2$, so $\mathcal{L} = \{t-1\}$, $\mathcal{U} = \{2\}$ and $\alpha = +3 > 0$ (the $G^+$ branch).
	Among the candidate upper bounds, the guard $A_u \land F_u$ with $u = 2$ is the only nonempty one (feasible for all $t \in [2,3)$, since $t-1 \leq 2$), so the supremum is realized at $u=2$, yielding
	$$ v_A'(t) = \alpha u + \beta = 3 \cdot 2 + (-3t) = -3t + 6 \quad (t \in [2,3)). $$
	
	For (B) we have $t-2 \leq c \leq t-1$, so $\mathcal{L} = \{t-2\}$, $\mathcal{U} = \{t-1\}$ and $\alpha = -2 < 0$ (the $G^-$ branch).
	Among the candidate lower bounds, the guard $A_\ell \land F_\ell$ with $\ell = t-2$ is the only nonempty one (feasible for all $t \in [2,3)$, since $t-2 \geq 0$ and $t-2 \leq t-1$), so the supremum is realized at $\ell = t-2$, yielding
	$$ v_B'(t) = \alpha \ell + \beta = -2 (t-2) + (2t - 5) = -1 \quad (t \in [2,3)). $$
	
	For (C) we have $0 \leq c \leq t-2$, so $\mathcal{L} = \{0\}$, $\mathcal{U} = \{t-2\}$ and again $\alpha = -2 < 0$ (the $G^-$ branch).
	Among the candidate lower bounds, the guard $A_\ell \land F_\ell$ with $\ell = 0$ is the only nonempty one (feasible for all $t \in [2,3)$, since $0 \leq t-2$), so the supremum is realized at $\ell = 0$, yielding
	$$ v_C'(t) = \alpha \ell + \beta = -2 \cdot 0 + (2t - 5) = 2t - 5 \quad (t \in [2,3)). $$
	
	Because $t \in [2,3)$ and $c \in [0,2]$, every evaluation time $\tau = t-c$ lies in $[0,3)$, so all three pieces $P_f^{(1)}, P_f^{(2)}, P_f^{(3)}$ are feasible on this segment.
	
	Eliminating $c$ over the union corresponds to taking the pointwise maximum of the three affine outcomes
	$v_A'(t) = -3t + 6$, $v_B'(t) = -1$, and $v_C'(t) = 2t - 5$.
	For all $t \in [2,3)$ we have $v_B'(t) = -1 \leq \max\{v_A'(t), v_C'(t)\}$, so $v_B'$ never attains the supremum.
	Hence the robustness is given by
	$$ g(t) = \rob(w,t,\psi) = \max\{-3t+6,\, 2t-5\}. $$
	The intersection $-3t + 6 = 2t - 5$ occurs at $t = 11/5 = 2.2$, so we obtain
	$$ g(t) =
	\begin{cases}
		-3t+6, & t \in [2, 11/5),\\
		2t-5,  & t \in [11/5, 3).
	\end{cases}$$
	Equivalently, over $(t,v')$, the monitor outputs
	$$
	\{ 2 \leq t < 11/5,\ v' = -3t + 6 \}
	\sqcup
	\{ 11/5 \leq t < 3,\ v' = 2t - 5 \}.
	$$
	
	Note that if we instead quantify $\exists c \in [1,2]$, then for $t\in[0,1)$ we have $\tau=t-c\in[-2,0)$, which lies outside the available trace domain.
	In this case, the polyhedral construction yields an empty projection onto $t$ for that region, and therefore $v'$ is undefined on $[0,1)$ (cf. \cref{rem:well-definedness}).
\qed\end{example}

Finally, we show that our monitoring algorithm is correct by establishing the correctness of each subroutine.
We first prove \textsc{PlpMaximize} correct by characterizing the supremum of an affine function over a polyhedral interval, which is always realized at a boundary determined by affine guards.
This result extends directly to \textsc{EliminateBySup}, which applies the same mechanism to each piece of a polyhedral union and then takes pointwise maxima via dominated-point removal.
We prove afterward that \textsc{Term} is correct by structural induction on terms, since each case preserves the intended affine relation between the output variable and the term's value.
Similarly, we handle \textsc{FormulaRobust} by induction on formulas, showing that each logical operator is correctly mapped to its quantitative counterpart; in particular, disjunction intersects the domains of both operands, enforcing the uniform well-definedness convention (\cref{rem:well-definedness}), while existential quantification ranges only over instantiations for which the robustness of the quantified subformula is defined.
The monitor's correctness follows by composing these results: (i) at each iteration the stored signal pieces cover the backward horizon window $[t-h,t]$ for every $t$ in the current segment, so all subroutine preconditions are met, and (ii) garbage collection preserves this invariant for all future iterations while ensuring bounded memory.

\begin{theorem}[Correctness of \textsc{Monitor} (\Cref{alg:monitoring})] \label{cl:correctness-monitor}
	Let $\phi$ be a pSFO formula with a backward horizon of $h$, and let $w$ be a piecewise-linear signal sampled with period $\Delta$.
	For every $i \geq 1$ and every $t \in [(i-1)\Delta, i\Delta)$ such that $\rob(w,t,\phi)$ is defined, \textsc{Monitor} outputs $v_i$ satisfying $v_i(t) = \rob(w,t,\phi)$.
	For every $t \in [(i-1)\Delta, i\Delta)$ such that $\rob(w,t,\phi)$ is undefined, $v_i(t)$ is undefined.
	Moreover, whenever $v_i(t)$ is defined, it depends only on the restriction of $w$ to $[t-h, t]$.
\end{theorem}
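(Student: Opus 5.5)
The proof proceeds bottom-up through the subroutines, in the order sketched just before the statement. Each result is phrased as a semantic invariant on lists of polyhedra. Say that a pair $(\P,v)$ \emph{represents} a partial function $g$ of the free variables $Y$ of a subformula (including $t$) on a domain $D$ if two conditions hold. First, for every $y\in D$, $g(y)$ is defined iff $y$ lies in the projection of $\P$ onto $Y$. Second, in that case the unique $v$-value attached to $y$ in $\P$ is $g(y)$. For $\textsc{EliminateBySup}$, uniqueness holds only after dominated-point removal.

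\textbf{Step 1: \textsc{PlpMaximize}.} I would first prove a lemma for a single polyhedron $P(Y,x)$ and affine objective $o=\alpha x+\beta$. It states that for each $y\in P_Y$ the section $P_y$ is the interval $[\max_{\ell\in\mathcal L}\ell(y),\min_{u\in\mathcal U}u(y)]$, provided $y$ satisfies $P_0$. It further states that $\sup_{x\in P_y}o(y,x)$ equals $\alpha u^*+\beta$, $\alpha\ell^*+\beta$, $\beta$, or $+\infty$ according to the sign of $\alpha(y)$ and the emptiness of $\mathcal U$ or $\mathcal L$. Here $u^*$ and $\ell^*$ are the tightest bounds, and the guards $A_u\sqcap F_u$ (resp.\ $A_\ell\sqcap F_\ell$) cover exactly the feasible $y$ in $G^\pm$. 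This is elementary one-dimensional LP. Ties between several tightest bounds give the same value, so the output is still functional. Closing $I$ does not change the supremum, because $o$ is continuous in $x$ and the sections are intervals. Two further points must be verified. $\textsc{NormalizeObj}$ is sound because on each piece $v$ is an affine function of $(Y,x)$; this is an inductive invariant from Step 3. The final filtering line realizes $\sup$ over a finite union as the pointwise maximum of per-piece suprema. Together these give correctness of \textsc{EliminateBySup}.

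\textbf{Step 2: \textsc{Term}.} By structural induction on $\theta$, I would show that $\textsc{Term}(\theta,P_\dom)$ represents $y\mapsto\sem{\theta}_{w,y}$ on $P_\dom$. The value is defined exactly when all accesses $f(\tau)$ hit a stored piece of $\P_f$. The base cases are immediate. For $\theta_1\pm\theta_2$, intersection restricts to the common domain, and projection preserves functionality since $v$ is determined by $v_1,v_2$. For $f(\tau)$, substituting $t_f\mapsto\tau$ gives exactly the graph of $\sem f_w\circ\tau$, because $\P_f$ is the graph of the piecewise-linear interpolant on the stored window.

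\textbf{Step 3: \textsc{FormulaRobust}.} By induction on the normalized formula, using Step 2 for atoms, I would show that the output represents $y\mapsto\rob(w,y,\phi)$ on $P_\dom$, with $v$ affine on each piece. Negation is a linear change of variable. Disjunction is the case split, which computes $\max$ on the intersection of domains, matching the well-definedness convention. For quantifiers, the recursive call adds $x\in I$ to the domain, and Step 1 computes the supremum only over instantiations where the body is defined. This matches the convention that the supremum ranges over defined instantiations and that the result is undefined if none exists.

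\textbf{Step 4: the monitor and locality.} I would prove the invariant that at iteration $i$, $\P_f$ is exactly the graph of $\sem f_w$ on $[i\Delta-h',i\Delta]$ for some window containing $[t-h,t]$ for all $t\in[(i-1)\Delta,i\Delta)$. The argument uses the pSFO condition and the definition of $H^-$: every access time satisfies $t+\mathit{off}\in[t-h,t]$, so no access lies in the future and none lies before $t-h$. Garbage collection drops only pieces ending before $i\Delta-h$. Those pieces are needed neither now nor in later iterations, since the reference times only grow. Thus the trimmed $\P_f$ yields the same represented functions as the full history, and Step 3 applied with $P_\dom=\{t\in[(i-1)\Delta,i\Delta)\}$ gives $v_i(t)=\rob(w,t,\phi)$. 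The same invariant gives the undefinedness clause and the locality clause.

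The main obstacle is Step 1 together with the functionality invariant. Keeping one designated robustness variable per piece across projection, ties among bounds, the $\pm\infty$ cases, and the open-versus-closed boundary subtleties when taking sup over half-open pieces all need care. I would handle these by always reasoning about closures within a piece and arguing that the suprema agree.
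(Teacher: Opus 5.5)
Your proposal is correct and follows essentially the same route as the paper. The paper likewise uses a one-dimensional boundary-supremum argument for \textsc{PlpMaximize}, lifted to \textsc{EliminateBySup} with dominated-point removal acting as a pointwise maximum, then structural inductions for \textsc{Term} and \textsc{FormulaRobust} matching the well-definedness convention, and finally a stored-window invariant combined with the pSFO backward-horizon bound to obtain correctness, undefinedness, locality, and soundness of garbage collection. The one small difference is that you explicitly carry through the induction the invariant that $v$ is affine on each piece, which \textsc{NormalizeObj} needs; the paper only assumes this as a hypothesis of its \textsc{EliminateBySup} lemma.
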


\section{Experiments} \label{sec:experiments}

We implemented our algorithm in Python using \emph{SymPy}~\cite{meurer2017sympy} for symbolic mathematics and \emph{pplpy}~\cite{pplpy_pypi}, a Python wrapper for the \emph{Parma Polyhedra Library}~\cite{bagnara2006parma}.
The implementation\footnote[1]{\url{https://github.com/ista-vamos/artifact-fm-qsfo}} provides an SFO parser, polyhedra representations and operations, and the monitoring algorithm.

Our experiments address a basic feasibility question:
can the proposed online monitor compute robustness values fast enough to keep up with the sampling rate of representative cyber-physical benchmarks?
Concretely, we measure the per-segment runtime of robustness computation and examine how it scales with specification complexity, in particular when moving from horizon-$0$ (history-free) constraints to longer-horizon formulas, and to formulas with nested quantifiers or Boolean combinations of multiple properties.

All experiments were executed on a MacBook Pro M4 (macOS Sequoia 15.7.1) with 48\,GB of memory and 4.5\,GHz CPU.
Although our algorithm is online, experiments were run offline on pre-recorded CSV files for reproducibility.

\subsection{Benchmarks and Properties}

(1) \emph{Dynamic Obstacle Avoidance}.
The first benchmark~\cite{fossen2000survey,qin2021learning} involves a parcel delivery drone flying through an urban environment populated by 8 other drones that act as obstacles. The ego drone is controlled by a neural-network policy, while the remaining drones follow pre-defined trajectories. The environment's state space is eight-dimensional, defined by the vector
$x = [x, y, z, v_x, v_y, v_z, \theta_x, \theta_y]$,
which includes the drone's three spatial coordinates, three velocity components, and two angular variables representing roll and pitch. The control inputs are angular accelerations of $\theta_x$ and $\theta_y$, as well as the vertical thrust $T$.
The control interval is $0.1s$.
We check three different properties in this benchmark:

\begin{enumerate}
    \item[P1:] \emph{Safe velocity}. From time $t_s$ onward, the absolute value of velocity in every direction must be less than $v_{max}m/s$. This property corresponds to the formula
    $t > t_s \implies (|v_x(t)| < v_{max} \land |v_y(t)| < v_{max} \land |v_z(t)| < v_{max})$.
    \item[P2:] \emph{Motion smoothness}. The time-derivative of velocity is bounded: $|\dot{v_x}|\leq a_{max}$, $|\dot{v_y}|\leq a_{max}$,  $|\dot{v_z}|\leq a_{max}$.
    Because we have fixed sampling of $0.1s$, we can approximate $\dot{f}$ as
    $\frac{(f(t) - f(t - 0.1)}{0.1}$ and then rewrite the whole inequality to the expression $|f(t) - f(t - 0.1)| \le c$ where $c = 0.1\cdot a_{max}$. We monitor the conjunction of these inequalities for each derivative above.
    \item[P3:] \emph{Obstacle separation}. For all obstacles, the distance in every direction must be greater than $d_{min}$ at all times.
\end{enumerate}

\noindent(2) \emph{Altitude Control}. 
This benchmark~\cite{heidlauf2018verification} models a simplified F-16 fighter jet that must maintain a safe altitude above the ground. 
The state space is represented by the vector $x = [v_a, \alpha, \beta, \phi, \mathit{pitch}, \mathit{yaw}, \dot{\phi}, \dot{\mathit{pitch}}, \dot{\mathit{yaw}}, n, e, \mathit{alt}, p_{\mathit{lag}}]$
which includes the airspeed ($v_a$), angle of attack ($\alpha$), sideslip angle ($\beta$), roll ($\phi$), pitch, yaw, roll rate, pitch rate, yaw rate, northward and eastward displacements, altitude (\textit{alt}), engine power lag. 
The sampling time of this environment is fixed at 0.033~s. We monitor the following properties:

\begin{enumerate}
\item[P1:] \emph{Safe altitude}. The altitude shall satisfy $300m \le \text{alt}(t) \le 13000m$ at all times.
\item[P2:] \emph{Safe airspeed}. The airspeed shall remain within a safe range $v_{\min} \le v_a(t) \le v_{\max}\; \text{for all } t$.
\item[P3:] \emph{Altitude recovery}. Whenever $\text{alt}(t) < 500$\,m,
then within $\tau_{\text{rec}}$ the altitude shall be at least $700$\,m
and remain so for at least $\tau_{\text{hold}}$.
We monitor the formula $alt(t) < 500 \implies \exists t_r \in [0, \tau_{rec}]. \forall t_h\in[0, \tau_{hold}].\ alt(t + t_r + t_h) \ge 700$.
\end{enumerate}

\subsection{Results and Discussion}

\begin{table}[t]
\caption{The average time in seconds of computing the robustness value for each new segment. }
\renewcommand{\arraystretch}{1.4}
\setlength{\tabcolsep}{5pt} 
\centering
\begin{tabular}{cccc p{1mm} cccc}
\toprule
\multicolumn{4}{c}{\emph{Avoidance}} && \multicolumn{4}{c}{\emph{Altitude Control}} \\
\cmidrule(lr){1-4}
\cmidrule(lr){5-9}
\emph{P1} & \emph{P2} & \emph{P3} & \emph{P1 + P2 + P3} & & \emph{P1} & \emph{P2} & \emph{P3} & \emph{P1 + P2 + P3}  \\
\midrule
0.002 & 0.005 & 0.014 & 0.021 & & 0.002 & 0.002 & 0.210 & 0.532 \\
\bottomrule
\vspace{0.25em}
\end{tabular}
\label{tab:results}
\end{table}

We generated 100 simulation traces from both scenarios with randomized initial states.
For \emph{Dynamic Obstacle Avoidance}, each simulation either stops when the drone collides into nearby drones or reaches a maximum of 1000 simulation steps, whereas the \emph{Altitude Control} simulation has a uniform length of 1000 steps.
The results of our experiments are summarized in Table~\ref{tab:results}.
For the \emph{Dynamic Obstacle Avoidance} scenario, the time of analyzing each observed segment is smaller than the control time ($0.1$\,s) by a margin.
This means that the monitor is timely enough to be used for making control decisions.
One of the reasons for such low computation times is that formulas of properties \emph{P1} and \emph{P3} have horizon 0, which means we do not have to remember and analyze any history.
For property \emph{P2} (and combination of all properties \emph{P1 + P2 + P3}), the horizon is $0.1$ which means remembering one previous segment is sufficient for evaluating the formulas.

For the \emph{Altitude Control} benchmark, first two properties are very similar to the first property of the avoidance benchmark, and therefore the times are similar. Property \emph{P3}, however, is more interesting as the formula contains quantifiers. We used constants $\tau_{rec} = 10$ and $\tau_{hold} = 10$, which means that the formula has horizon $20$.
Since sampling rate is $0.033\,s$, the monitor has to remember more than 600 last segments. Still, it was able to analyze each new segment in around $0.2\,s$. Monitoring all three properties together (\emph{P1 + P2 + P3}) increases the runtime further as the monitor has to compute symbolically minima and maxima over conjunctions and disjunctions.
These computation times are not low enough to be sufficient for control, but are sufficient for raising warnings.

\section{Conclusion}
We introduced the first robustness-based quantitative semantics for Signal First-Order Logic (SFO) and an online monitoring algorithm for a past-time, bounded-response fragment.
Our approach combines pastification with symbolic polyhedral computations to obtain robustness values for expressive SFO specifications over piecewise-linear signals.
A prototype implementation and experiments on representative benchmarks demonstrate that quantitative SFO monitoring is feasible in practice and scales to non-trivial specifications.

Promising directions for future work include using robustness as an objective for controller synthesis, for example as a reward signal in reinforcement learning for cyber-physical systems~\cite{DBLP:journals/ral/MengF23}.
We also aim to develop compositional and decentralized monitoring for multi-agent and distributed settings, where approximate monitoring can improve scalability~\cite{DBLP:conf/rv/BonakdarpourMNS24}.
Another natural extension is integrating SFO monitoring with runtime enforcement~\cite{DBLP:journals/fmsd/FalconeMFR11} (a.k.a.\ shielding~\cite{DBLP:journals/fmsd/KonighoferABHKT17}), where the monitor's low latency relative to the control period could enable a feedback loop that steers the controller away from violations.

\subsubsection*{Acknowledgments}
We thank the anonymous reviewers for their helpful comments.
This work was supported by the European Research Council (ERC) Grants VAMOS (No. 101020093) and HYPER (No. 101055412), and by the Advanced Research and Invention Agency under the Safeguarded AI programme (MSAI-PR01-P047).

\bibliographystyle{plain}
\bibliography{sample-base}

@article{DBLP:journals/lmcs/BokerHMS25,
  author       = {Udi Boker and
                  Thomas A. Henzinger and
                  Nicolas Mazzocchi and
                  N. Ege Sara{\c{c}}},
  title        = {Safety and Liveness of Quantitative Properties and Automata},
  journal      = {Log. Methods Comput. Sci.},
  volume       = {21},
  number       = {2},
  year         = {2025},
  url          = {https://doi.org/10.46298/lmcs-21(2:2)2025},
  doi          = {10.46298/LMCS-21(2:2)2025},
  bibsource    = {dblp computer science bibliography, https://dblp.org}
}

@inproceedings{DBLP:conf/concur/BokerHMS23,
  author       = {Udi Boker and
                  Thomas A. Henzinger and
                  Nicolas Mazzocchi and
                  N. Ege Sara{\c{c}}},
  editor       = {Guillermo A. P{\'{e}}rez and
                  Jean{-}Fran{\c{c}}ois Raskin},
  title        = {Safety and Liveness of Quantitative Automata},
  booktitle    = {34th International Conference on Concurrency Theory, {CONCUR} 2023,
                  Antwerp, Belgium, September 18-23, 2023},
  series       = {LIPIcs},
  volume       = {279},
  pages        = {17:1--17:18},
  publisher    = {Schloss Dagstuhl - Leibniz-Zentrum f{\"{u}}r Informatik},
  year         = {2023},
  url          = {https://doi.org/10.4230/LIPIcs.CONCUR.2023.17},
  doi          = {10.4230/LIPICS.CONCUR.2023.17},
  bibsource    = {dblp computer science bibliography, https://dblp.org}
}

@inproceedings{DBLP:conf/fossacs/HenzingerMS23,
  author       = {Thomas A. Henzinger and
                  Nicolas Mazzocchi and
                  N. Ege Sara{\c{c}}},
  editor       = {Orna Kupferman and
                  Pawel Sobocinski},
  title        = {Quantitative Safety and Liveness},
  booktitle    = {Foundations of Software Science and Computation Structures - 26th
                  International Conference, FoSSaCS 2023, Held as Part of the European
                  Joint Conferences on Theory and Practice of Software, {ETAPS} 2023,
                  Paris, France, April 22-27, 2023, Proceedings},
  series       = {Lecture Notes in Computer Science},
  volume       = {13992},
  pages        = {349--370},
  publisher    = {Springer},
  year         = {2023},
  url          = {https://doi.org/10.1007/978-3-031-30829-1\_17},
  doi          = {10.1007/978-3-031-30829-1\_17},
  bibsource    = {dblp computer science bibliography, https://dblp.org}
}

@inproceedings{DBLP:conf/rv/HenzingerMS22,
  author       = {Thomas A. Henzinger and
                  Nicolas Mazzocchi and
                  N. Ege Sara{\c{c}}},
  editor       = {Thao Dang and
                  Volker Stolz},
  title        = {Abstract Monitors for Quantitative Specifications},
  booktitle    = {Runtime Verification - 22nd International Conference, {RV} 2022, Tbilisi,
                  Georgia, September 28-30, 2022, Proceedings},
  series       = {Lecture Notes in Computer Science},
  volume       = {13498},
  pages        = {200--220},
  publisher    = {Springer},
  year         = {2022},
  url          = {https://doi.org/10.1007/978-3-031-17196-3\_11},
  doi          = {10.1007/978-3-031-17196-3\_11},
  bibsource    = {dblp computer science bibliography, https://dblp.org}
}

@inproceedings{DBLP:conf/lics/HenzingerS21,
  author       = {Thomas A. Henzinger and
                  N. Ege Sara{\c{c}}},
  title        = {Quantitative and Approximate Monitoring},
  booktitle    = {36th Annual {ACM/IEEE} Symposium on Logic in Computer Science, {LICS}
                  2021, Rome, Italy, June 29 - July 2, 2021},
  pages        = {1--14},
  publisher    = {{IEEE}},
  year         = {2021},
  url          = {https://doi.org/10.1109/LICS52264.2021.9470547},
  doi          = {10.1109/LICS52264.2021.9470547},
  bibsource    = {dblp computer science bibliography, https://dblp.org}
}

@article{DBLP:journals/isse/GorostiagaS25,
  author       = {Felipe Gorostiaga and
                  C{\'{e}}sar S{\'{a}}nchez},
  title        = {General monitorability of totally ordered verdict domains},
  journal      = {Innov. Syst. Softw. Eng.},
  volume       = {21},
  number       = {2},
  pages        = {673--686},
  year         = {2025},
  url          = {https://doi.org/10.1007/s11334-024-00557-2},
  doi          = {10.1007/S11334-024-00557-2},
  bibsource    = {dblp computer science bibliography, https://dblp.org}
}

@inproceedings{DBLP:conf/rv/BonakdarpourMNS24,
  author       = {Borzoo Bonakdarpour and
                  Anik Momtaz and
                  Dejan Nickovic and
                  N. Ege Sara{\c{c}}},
  editor       = {Erika {\'{A}}brah{\'{a}}m and
                  Houssam Abbas},
  title        = {Approximate Distributed Monitoring Under Partial Synchrony: Balancing
                  Speed {\&} Accuracy},
  booktitle    = {Runtime Verification - 24th International Conference, {RV} 2024, Istanbul,
                  Turkey, October 15-17, 2024, Proceedings},
  series       = {Lecture Notes in Computer Science},
  volume       = {15191},
  pages        = {282--301},
  publisher    = {Springer},
  year         = {2024},
  url          = {https://doi.org/10.1007/978-3-031-74234-7\_18},
  doi          = {10.1007/978-3-031-74234-7\_18},
  bibsource    = {dblp computer science bibliography, https://dblp.org}
}

@article{DBLP:journals/fmsd/KonighoferABHKT17,
  author       = {Bettina K{\"{o}}nighofer and
                  Mohammed Alshiekh and
                  Roderick Bloem and
                  Laura R. Humphrey and
                  Robert K{\"{o}}nighofer and
                  Ufuk Topcu and
                  Chao Wang},
  title        = {Shield synthesis},
  journal      = {Formal Methods Syst. Des.},
  volume       = {51},
  number       = {2},
  pages        = {332--361},
  year         = {2017},
  url          = {https://doi.org/10.1007/s10703-017-0276-9},
  doi          = {10.1007/S10703-017-0276-9},
  bibsource    = {dblp computer science bibliography, https://dblp.org}
}

@article{DBLP:journals/fmsd/FalconeMFR11,
  author       = {Yli{\`{e}}s Falcone and
                  Laurent Mounier and
                  Jean{-}Claude Fernandez and
                  Jean{-}Luc Richier},
  title        = {Runtime enforcement monitors: composition, synthesis, and enforcement
                  abilities},
  journal      = {Formal Methods Syst. Des.},
  volume       = {38},
  number       = {3},
  pages        = {223--262},
  year         = {2011},
  url          = {https://doi.org/10.1007/s10703-011-0114-4},
  doi          = {10.1007/S10703-011-0114-4},
  bibsource    = {dblp computer science bibliography, https://dblp.org}
}

@article{DBLP:journals/ral/MengF23,
  author       = {Yue Meng and
                  Chuchu Fan},
  title        = {Signal Temporal Logic Neural Predictive Control},
  journal      = {{IEEE} Robotics Autom. Lett.},
  volume       = {8},
  number       = {11},
  pages        = {7719--7726},
  year         = {2023},
  url          = {https://doi.org/10.1109/LRA.2023.3315536},
  doi          = {10.1109/LRA.2023.3315536},
  bibsource    = {dblp computer science bibliography, https://dblp.org}
}

@misc{pplpy_pypi,
  title = {pplpy: Python interface to the Parma Polyhedra Library},
  howpublished = {\url{https://pypi.org/project/pplpy/}},
  note = {Accessed: 2025-12-05}
}

@inproceedings{DBLP:conf/emsoft/BakhirkinFHN18,
  author       = {Alexey Bakhirkin and
                  Thomas Ferr{\`{e}}re and
                  Thomas A. Henzinger and
                  Dejan Nickovic},
  editor       = {Bj{\"{o}}rn B. Brandenburg and
                  Sriram Sankaranarayanan},
  title        = {The first-order logic of signals: keynote},
  booktitle    = {Proceedings of the International Conference on Embedded Software,
                  {EMSOFT} 2018, Torino, Italy, September 30 - October 5, 2018},
  pages        = {1},
  publisher    = {{IEEE}},
  year         = {2018},
  url          = {https://doi.org/10.1109/EMSOFT.2018.8537203},
  doi          = {10.1109/EMSOFT.2018.8537203},
  bibsource    = {dblp computer science bibliography, https://dblp.org}
}

@article{fossen2000survey,
  title={A survey on nonlinear ship control: From theory to practice},
  author={Fossen, Thor I},
  journal={IFAC Proceedings Volumes},
  volume={33},
  number={21},
  pages={1--16},
  year={2000},
  publisher={Elsevier}
}

@article{bagnara2006parma,
  title={The parma polyhedra library},
  author={Bagnara, Roberto and Hill, Patricia M and Zaffanella, Enea and Bagnara, Abramo},
  journal={Sci. Comput. Program},
  volume={72},
  year={2006}
}

@article{meurer2017sympy,
  title={SymPy: symbolic computing in Python},
  author={Meurer, Aaron and Smith, Christopher P and Paprocki, Mateusz and {\v{C}}ert{\'\i}k, Ond{\v{r}}ej and Kirpichev, Sergey B and Rocklin, Matthew and Kumar, AMiT and Ivanov, Sergiu and Moore, Jason K and Singh, Sartaj and others},
  journal={PeerJ Computer Science},
  volume={3},
  pages={e103},
  year={2017},
  publisher={PeerJ Inc.}
}

@article{heidlauf2018verification,
  title={Verification Challenges in F-16 Ground Collision Avoidance and Other Automated Maneuvers.},
  author={Heidlauf, Peter and Collins, Alexander and Bolender, Michael and Bak, Stanley},
  journal={ARCH@ ADHS},
  volume={2018},
  year={2018}
}

@article{qin2021learning,
  title={Learning safe multi-agent control with decentralized neural barrier certificates},
  author={Qin, Zengyi and Zhang, Kaiqing and Chen, Yuxiao and Chen, Jingkai and Fan, Chuchu},
  journal={arXiv preprint arXiv:2101.05436},
  year={2021}
}

@inproceedings{DBLP:conf/rv/AsarinDMN11,
  author       = {Eugene Asarin and
                  Alexandre Donz{\'{e}} and
                  Oded Maler and
                  Dejan Nickovic},
  editor       = {Sarfraz Khurshid and
                  Koushik Sen},
  title        = {Parametric Identification of Temporal Properties},
  booktitle    = {Runtime Verification - Second International Conference, {RV} 2011,
                  San Francisco, CA, USA, September 27-30, 2011, Revised Selected Papers},
  series       = {Lecture Notes in Computer Science},
  volume       = {7186},
  pages        = {147--160},
  publisher    = {Springer},
  year         = {2011},
  url          = {https://doi.org/10.1007/978-3-642-29860-8\_12},
  doi          = {10.1007/978-3-642-29860-8\_12},
  bibsource    = {dblp computer science bibliography, https://dblp.org}
}

@inproceedings{DBLP:conf/codit/Gol18,
  author       = {Ebru Aydin Gol},
  title        = {Efficient Online Monitoring and Formula Synthesis with Past {STL}},
  booktitle    = {5th International Conference on Control, Decision and Information
                  Technologies, CoDIT 2018, Thessaloniki, Greece, April 10-13, 2018},
  pages        = {916--921},
  publisher    = {{IEEE}},
  year         = {2018},
  url          = {https://doi.org/10.1109/CoDIT.2018.8394941},
  doi          = {10.1109/CODIT.2018.8394941},
  bibsource    = {dblp computer science bibliography, https://dblp.org}
}

@inproceedings{DBLP:conf/formats/MalerN04,
  author       = {Oded Maler and
                  Dejan Nickovic},
  editor       = {Yassine Lakhnech and
                  Sergio Yovine},
  title        = {Monitoring Temporal Properties of Continuous Signals},
  booktitle    = {Formal Techniques, Modelling and Analysis of Timed and Fault-Tolerant
                  Systems, Joint International Conferences on Formal Modelling and Analysis
                  of Timed Systems, {FORMATS} 2004 and Formal Techniques in Real-Time
                  and Fault-Tolerant Systems, {FTRTFT} 2004, Grenoble, France, September
                  22-24, 2004, Proceedings},
  series       = {Lecture Notes in Computer Science},
  volume       = {3253},
  pages        = {152--166},
  publisher    = {Springer},
  year         = {2004},
  url          = {https://doi.org/10.1007/978-3-540-30206-3\_12},
  doi          = {10.1007/978-3-540-30206-3\_12},
  bibsource    = {dblp computer science bibliography, https://dblp.org}
}

@inproceedings{DBLP:conf/formats/DonzeM10,
  author       = {Alexandre Donz{\'{e}} and
                  Oded Maler},
  editor       = {Krishnendu Chatterjee and
                  Thomas A. Henzinger},
  title        = {Robust Satisfaction of Temporal Logic over Real-Valued Signals},
  booktitle    = {Formal Modeling and Analysis of Timed Systems - 8th International
                  Conference, {FORMATS} 2010, Klosterneuburg, Austria, September 8-10,
                  2010. Proceedings},
  series       = {Lecture Notes in Computer Science},
  volume       = {6246},
  pages        = {92--106},
  publisher    = {Springer},
  year         = {2010},
  url          = {https://doi.org/10.1007/978-3-642-15297-9\_9},
  doi          = {10.1007/978-3-642-15297-9\_9},
  bibsource    = {dblp computer science bibliography, https://dblp.org}
}

@article{DBLP:journals/tcs/FainekosP09,
  author       = {Georgios E. Fainekos and
                  George J. Pappas},
  title        = {Robustness of temporal logic specifications for continuous-time signals},
  journal      = {Theor. Comput. Sci.},
  volume       = {410},
  number       = {42},
  pages        = {4262--4291},
  year         = {2009},
  url          = {https://doi.org/10.1016/j.tcs.2009.06.021},
  doi          = {10.1016/J.TCS.2009.06.021},
  bibsource    = {dblp computer science bibliography, https://dblp.org}
}

@inproceedings{DBLP:conf/cav/DonzeFM13,
  author       = {Alexandre Donz{\'{e}} and
                  Thomas Ferr{\`{e}}re and
                  Oded Maler},
  editor       = {Natasha Sharygina and
                  Helmut Veith},
  title        = {Efficient Robust Monitoring for {STL}},
  booktitle    = {Computer Aided Verification - 25th International Conference, {CAV}
                  2013, Saint Petersburg, Russia, July 13-19, 2013. Proceedings},
  series       = {Lecture Notes in Computer Science},
  volume       = {8044},
  pages        = {264--279},
  publisher    = {Springer},
  year         = {2013},
  url          = {https://doi.org/10.1007/978-3-642-39799-8\_19},
  doi          = {10.1007/978-3-642-39799-8\_19},
  bibsource    = {dblp computer science bibliography, https://dblp.org}
}

@article{DBLP:journals/fmsd/DeshmukhDGJJS17,
  author       = {Jyotirmoy V. Deshmukh and
                  Alexandre Donz{\'{e}} and
                  Shromona Ghosh and
                  Xiaoqing Jin and
                  Garvit Juniwal and
                  Sanjit A. Seshia},
  title        = {Robust online monitoring of signal temporal logic},
  journal      = {Formal Methods Syst. Des.},
  volume       = {51},
  number       = {1},
  pages        = {5--30},
  year         = {2017},
  url          = {https://doi.org/10.1007/s10703-017-0286-7},
  doi          = {10.1007/S10703-017-0286-7},
  bibsource    = {dblp computer science bibliography, https://dblp.org}
}

@article{DBLP:journals/sttt/MalerN13,
  author       = {Oded Maler and
                  Dejan Nickovic},
  title        = {Monitoring properties of analog and mixed-signal circuits},
  journal      = {Int. J. Softw. Tools Technol. Transf.},
  volume       = {15},
  number       = {3},
  pages        = {247--268},
  year         = {2013},
  url          = {https://doi.org/10.1007/s10009-012-0247-9},
  doi          = {10.1007/S10009-012-0247-9},
  bibsource    = {dblp computer science bibliography, https://dblp.org}
}

@article{DBLP:journals/iandc/BrimDSV14,
  author       = {Lubos Brim and
                  Petr Dluhos and
                  David Safr{\'{a}}nek and
                  Tomas Vejpustek},
  title        = {STL\({}^{\mbox{*}}\): Extending signal temporal logic with signal-value freezing operator},
  journal      = {Inf. Comput.},
  volume       = {236},
  pages        = {52--67},
  year         = {2014},
  url          = {https://doi.org/10.1016/j.ic.2014.01.012},
  doi          = {10.1016/J.IC.2014.01.012},
  bibsource    = {dblp computer science bibliography, https://dblp.org}
}

@inbook{Motzkin1953DD,
url = {https://doi.org/10.1515/9781400881970-004},
title = {3. The Double Description Method},
booktitle = {Contributions to the Theory of Games, Volume II},
author = {T. S. Motzkin and H. Raiffa and G. L. Thompson and R. M. Thrall},
editor = {Harold William Kuhn and Albert William Tucker},
publisher = {Princeton University Press},
address = {Princeton},
pages = {51--74},
doi = {doi:10.1515/9781400881970-004},
isbn = {9781400881970},
year = {1953}
}

@inproceedings{DBLP:conf/cococ/FukudaP95,
  author       = {Komei Fukuda and
                  Alain Prodon},
  editor       = {Michel Deza and
                  Reinhardt Euler and
                  Yannis Manoussakis},
  title        = {Double Description Method Revisited},
  booktitle    = {Combinatorics and Computer Science, 8th Franco-Japanese and 4th Franco-Chinese
                  Conference, Brest, France, July 3-5, 1995, Selected Papers},
  series       = {Lecture Notes in Computer Science},
  volume       = {1120},
  pages        = {91--111},
  publisher    = {Springer},
  year         = {1995},
  url          = {https://doi.org/10.1007/3-540-61576-8\_77},
  doi          = {10.1007/3-540-61576-8\_77},
  bibsource    = {dblp computer science bibliography, https://dblp.org}
}

@article{DBLP:journals/jss/BoufaiedJBBP21,
  author       = {Chaima Boufaied and
                  Maris Jukss and
                  Domenico Bianculli and
                  Lionel Claude Briand and
                  Yago Isasi Parache},
  title        = {Signal-Based Properties of Cyber-Physical Systems: Taxonomy and Logic-based
                  Characterization},
  journal      = {J. Syst. Softw.},
  volume       = {174},
  pages        = {110881},
  year         = {2021},
  url          = {https://doi.org/10.1016/j.jss.2020.110881},
  doi          = {10.1016/J.JSS.2020.110881},
  bibsource    = {dblp computer science bibliography, https://dblp.org}
}

@inproceedings{DBLP:conf/tacas/BakhirkinB19,
  author       = {Alexey Bakhirkin and
                  Nicolas Basset},
  editor       = {Tom{\'{a}}s Vojnar and
                  Lijun Zhang},
  title        = {Specification and Efficient Monitoring Beyond {STL}},
  booktitle    = {Tools and Algorithms for the Construction and Analysis of Systems
                  - 25th International Conference, {TACAS} 2019, Held as Part of the
                  European Joint Conferences on Theory and Practice of Software, {ETAPS}
                  2019, Prague, Czech Republic, April 6-11, 2019, Proceedings, Part
                  {II}},
  series       = {Lecture Notes in Computer Science},
  volume       = {11428},
  pages        = {79--97},
  publisher    = {Springer},
  year         = {2019},
  url          = {https://doi.org/10.1007/978-3-030-17465-1\_5},
  doi          = {10.1007/978-3-030-17465-1\_5},
  bibsource    = {dblp computer science bibliography, https://dblp.org}
}

@inproceedings{DBLP:conf/rv/SilvettiLN25,
  author       = {Simone Silvetti and
                  Michele Loreti and
                  Laura Nenzi},
  editor       = {Bettina K{\"{o}}nighofer and
                  Hazem Torfah},
  title        = {Modular and Online Monitoring of Temporal Logic Specification with
                  Integral and Filter},
  booktitle    = {Runtime Verification - 25th International Conference, {RV} 2025, Graz,
                  Austria, September 15-19, 2025, Proceedings},
  series       = {Lecture Notes in Computer Science},
  volume       = {16087},
  pages        = {120--139},
  publisher    = {Springer},
  year         = {2025},
  url          = {https://doi.org/10.1007/978-3-032-05435-7\_8},
  doi          = {10.1007/978-3-032-05435-7\_8},
  bibsource    = {dblp computer science bibliography, https://dblp.org}
}

@inproceedings{DBLP:conf/sigsoft/MenghiNGB19,
  author       = {Claudio Menghi and
                  Shiva Nejati and
                  Khouloud Gaaloul and
                  Lionel C. Briand},
  title        = {Generating automated and online test oracles for Simulink models with
                  continuous and uncertain behaviors},
  booktitle    = {{ESEC/SIGSOFT} {FSE}},
  pages        = {27--38},
  publisher    = {{ACM}},
  year         = {2019}
}

@inproceedings{DBLP:conf/rv/DokhanchiHF14,
  author       = {Adel Dokhanchi and
                  Bardh Hoxha and
                  Georgios Fainekos},
  title        = {On-Line Monitoring for Temporal Logic Robustness},
  booktitle    = {{RV}},
  series       = {Lecture Notes in Computer Science},
  volume       = {8734},
  pages        = {231--246},
  publisher    = {Springer},
  year         = {2014}
}

@article{DBLP:journals/fmsd/JaksicBGNN18,
  author       = {Stefan Jaksic and
                  Ezio Bartocci and
                  Radu Grosu and
                  Thang Nguyen and
                  Dejan Nickovic},
  title        = {Quantitative monitoring of {STL} with edit distance},
  journal      = {Formal Methods Syst. Des.},
  volume       = {53},
  number       = {1},
  pages        = {83--112},
  year         = {2018}
}

\newpage \appendix
\section{Omitted Proofs}

\subsection{Proof of \cref{cl:soundness} (Soundness of Quantitative Semantics)}

\textbf{Statement.}
Let $\phi$ be an SFO formula, $w$ a signal trace, and $v$ a valuation.
If $\rob(w, v, \phi) > 0$, then $w, v \models \phi$.
If $\rob(w, v, \phi) < 0$, then $w, v \not\models \phi$.

\begin{proof}
	We prove both implications simultaneously by structural induction on the structure of $\phi$.
	
	\smallskip\noindent
	\emph{Atomic case.}
	Let $\phi \equiv (\theta_1 < \theta_2)$.
	By definition, $\rob(w,v,\theta_1<\theta_2)=\sem{\theta_2}_{w,v}-\sem{\theta_1}_{w,v}$.
	If $\rob(w,v,\phi)>0$, then $\sem{\theta_1}_{w,v}<\sem{\theta_2}_{w,v}$, hence $w,v\models \theta_1<\theta_2$.
	If $\rob(w,v,\phi)<0$, then $\sem{\theta_1}_{w,v}>\sem{\theta_2}_{w,v}$, hence $w,v\not\models \theta_1<\theta_2$.
	
	\smallskip\noindent
	\emph{Negation.}
	Let $\phi \equiv \neg \psi$.
	Then $\rob(w,v,\neg\psi)=-\rob(w,v,\psi)$.
	If $\rob(w,v,\neg\psi)>0$, then $\rob(w,v,\psi)<0$ and by the inductive hypothesis we have $w,v\not\models \psi$, hence $w,v\models \neg\psi$.
	If $\rob(w,v,\neg\psi)<0$, then $\rob(w,v,\psi)>0$ and by the inductive hypothesis we have $w,v\models \psi$, hence $w,v\not\models \neg\psi$.
	
	\smallskip\noindent
	\emph{Disjunction.}
	Let $\phi \equiv \phi_1 \lor \phi_2$.
	Then $\rob(w,v,\phi_1\lor\phi_2)=\max\{\rob(w,v,\phi_1),\rob(w,v,\phi_2)\}$.
	If $\rob(w,v,\phi_1\lor\phi_2)>0$, then at least one of $\rob(w,v,\phi_1)$ or $\rob(w,v,\phi_2)$ is strictly greater than 0.
	By the inductive hypothesis, the corresponding disjunct is satisfied, and therefore $w,v\models \phi_1\lor\phi_2$.
	If $\rob(w,v,\phi_1\lor\phi_2)<0$, then both $\rob(w,v,\phi_1)<0$ and $\rob(w,v,\phi_2)<0$.
	By the inductive hypothesis, $w,v\not\models \phi_1$ and $w,v\not\models \phi_2$, hence $w,v\not\models \phi_1\lor\phi_2$.
	
	\smallskip\noindent
	\emph{Existential quantification.}
	We treat $\exists r\in R.\,\psi$ and $\exists s\in I.\,\psi$ uniformly.
	Let $\phi \equiv \exists x\in J.\,\psi$, where $x$ is either a value variable $r$ (with $J=R$) or a time variable $s$ (with $J=I$).
	Let $D \coloneqq \{\,a\in J \mid \rob(w,v[x\gets a],\psi)\ \text{is defined}\,\}$.
	By the well-definedness convention, we have $D\neq\emptyset$, and $\rob(w,v,\exists x\in J.\,\psi)=\sup_{a\in D}\rob(w,v[x\gets a],\psi)$.
	If $\rob(w,v,\exists x\in J.\,\psi)>0$, then there exists some $a\in D$ with $\rob(w,v[x\gets a],\psi)>0$.
	By the inductive hypothesis, $w,v[x\gets a]\models \psi$, hence $w,v\models \exists x\in J.\,\psi$.
	If $\rob(w,v,\exists x\in J.\,\psi)<0$, then for all $a\in D$ we have $\rob(w,v[x\gets a],\psi)<0$.
	By the inductive hypothesis, for all $a\in D$ we have $w,v[x\gets a]\not\models \psi$, hence $w,v\not\models \exists x\in J.\,\psi$.
\qed\end{proof}

\subsection{Proof of \cref{cl:past-equiv} (Equisatisfiability Under Pastification)}

\textbf{Statement.}
Let $\phi$ be a temporal bounded-response SFO formula with free time variable $t$, and let $h\ge H^+(\phi)$.
For every trace $w$ and valuation $v$, we have $w,v \models \phi$ iff $w,v' \models \Pi_h(\phi)$, where $v'(t)=v(t)+h$ and $v'(x)=v(x)$ for all $x\neq t$.

\begin{proof}
	Since $\Pi_h(\cdot)$ is defined by substituting the free variable $t$ with $(t-h)$, we apply the same substitution to terms and write $\Pi_h(\theta)$ for the term obtained from $\theta$ by replacing each occurrence of $t$ by $(t-h)$.
	
	First, we claim that for every (time or value) term $\theta$,
	\begin{equation} \label{eq:past-equiv-term}
		\sem{\Pi_h(\theta)}_{w,v'} \;=\; \sem{\theta}_{w,v}.
	\end{equation}
	
	We prove \eqref{eq:past-equiv-term} by structural induction on $\theta$.
	If $\theta$ is a rational constant $n$, then $\Pi_h(n)=n$ and both sides equal $n$.
	If $\theta$ is a variable $x\neq t$, then $\Pi_h(x)=x$ and $\sem{\Pi_h(x)}_{w,v'}=v'(x)=v(x)=\sem{x}_{w,v}$.
	If $\theta=t$, then $\Pi_h(t)=t-h$ and
	$
	\sem{t-h}_{w,v'} \;=\; v'(t)-h \;=\; (v(t)+h)-h \;=\; v(t) \;=\; \sem{t}_{w,v}.
	$
	If $\theta\equiv \theta_1\pm \theta_2$, then \eqref{eq:past-equiv-term} follows directly from the induction hypothesis and the compositional semantics of $\pm$.
	Finally, if $\theta\equiv f(\tau)$, then by the induction hypothesis applied to the time term $\tau$, we have
	$
	\sem{\Pi_h(f(\tau))}_{w,v'}
	= \sem{f(\Pi_h(\tau))}_{w,v'}
	= \sem{f}_w\!\bigl(\sem{\Pi_h(\tau)}_{w,v'}\bigr)
	= \sem{f}_w\!\bigl(\sem{\tau}_{w,v}\bigr)
	= \sem{f(\tau)}_{w,v}.
	$
	This completes the proof of \eqref{eq:past-equiv-term}.
	As an immediate consequence, for every occurrence of a signal access $f(\tau)$ in $\phi$, the access time $\sem{\tau}_{w,v}$ equals the corresponding access time $\sem{\Pi_h(\tau)}_{w,v'}$ in $\Pi_h(\phi)$; hence the well-definedness condition holds for $(w,v,\phi)$ iff it holds for $(w,v',\Pi_h(\phi))$.
	
	We now prove by structural induction on $\phi$ that
	$w,v\models \phi$ iff $w,v'\models \Pi_h(\phi)$.
		
	\smallskip\noindent
	\emph{Atomic case.}
	Let $\phi \equiv (\theta_1 < \theta_2)$.
	Using \eqref{eq:past-equiv-term} for $\theta_1$ and $\theta_2$,	we have
	$w,v\models \theta_1<\theta_2$
	iff
	$\sem{\theta_1}_{w,v}<\sem{\theta_2}_{w,v}$
	iff 
	$\sem{\Pi_h(\theta_1)}_{w,v'}<\sem{\Pi_h(\theta_2)}_{w,v'}$
	iff
	$w,v'\models \Pi_h(\theta_1<\theta_2)$.
	
	\smallskip\noindent
	\emph{Negation.}
	If $\phi\equiv \neg\psi$, then we have
	$w,v\models \neg\psi$
	iff
	$w,v\not\models \psi$
	iff
	$w,v'\not\models \Pi_h(\psi)$
	iff
	$w,v'\models \neg\Pi_h(\psi)$
	iff
	$w,v'\models \Pi_h(\neg\psi)$,
	where the middle equivalence uses the induction hypothesis for $\psi$.
	
	\smallskip\noindent
	\emph{Disjunction.}
	If $\phi\equiv \psi_1\lor \psi_2$, then by the induction hypothesis,
	$w,v\models \psi_1\lor\psi_2$
	iff
	$(w,v\models \psi_1)\ \text{or}\ (w,v\models \psi_2)$
	iff
	$(w,v'\models \Pi_h(\psi_1))\ \text{or}\ (w,v'\models \Pi_h(\psi_2))$
	iff
	$w,v'\models \Pi_h(\psi_1)\lor \Pi_h(\psi_2)$
	iff
	$w,v'\models \Pi_h(\psi_1\lor\psi_2)$.
	
	\smallskip\noindent
	\emph{Existential quantification over values.}
	If $\phi\equiv \exists r\in R.\,\psi$, then
	$w,v\models \exists r\in R.\,\psi$
	iff
	$\exists a\in R:\ w,v{[r\leftarrow a]}\models \psi$.
	For any $a\in R$, shifting $t$ commutes with updating $r$ (since $r\neq t$), i.e., $(v{[r\leftarrow a]})' = v'{[r\leftarrow a]}$.
	Applying the induction hypothesis to $\psi$ under the valuation $v{[r\leftarrow a]}$ yields
	$w,v{[r\leftarrow a]}\models \psi$
	iff
	$w,v'{[r\leftarrow a]}\models \Pi_h(\psi)$.
	Therefore,
	$w,v\models \exists r\in R.\,\psi$
	iff
	$\exists a\in R:\ w,v'{[r\leftarrow a]}\models \Pi_h(\psi)$
	iff
	$w,v'\models \exists r\in R.\,\Pi_h(\psi)$
	iff
	$w,v'\models \Pi_h(\exists r\in R.\,\psi)$.
	
	\smallskip\noindent
	\emph{Existential quantification over time.}
	If $\phi\equiv \exists s\in I.\,\psi$, the same argument applies: since $s$ is bound, it is not the free variable $t$, and thus $(v{[s\leftarrow a]})' = v'{[s\leftarrow a]}$ for all $a\in I$.
	The induction hypothesis for $\psi$ then yields the desired equivalence.
\qed\end{proof}

\subsection{Proof of \cref{cl:correctness-monitor} (Correctness of \textsc{Monitor} (\Cref{alg:monitoring}))}

\textbf{Statement.}
Let $\phi$ be a pSFO formula with a backward horizon of $h$, and let $w$ be a piecewise-linear signal sampled with period $\Delta$.
For every $i \geq 1$ and every $t \in [(i-1)\Delta, i\Delta)$ such that $\rob(w,t,\phi)$ is defined, \textsc{Monitor} outputs $v_i$ satisfying $v_i(t) = \rob(w,t,\phi)$.
For every $t \in [(i-1)\Delta, i\Delta)$ such that $\rob(w,t,\phi)$ is undefined, $v_i(t)$ is undefined.
Moreover, whenever $v_i(t)$ is defined, it depends only on the restriction of $w$ to $[t-h, t]$.

\medskip
\noindent
\textit{Conventions.}
To avoid confusion between valuations and the robustness variables used in the algorithms, we write $\nu$ for valuations of SFO variables.
For a temporal formula $\phi$ with unique free time variable $t$, we use the standard abbreviation
$\rob(w,t_0,\phi)\coloneqq \rob\bigl(w,\nu[t\gets t_0],\phi\bigr)$.
We also identify a returned pair $(\P,v)$ (a finite union of polyhedra and a distinguished output variable) with the partial function it encodes on the current segment:
for $t_0\in\RR$, define
$$
(\P,v)(t_0) \text{ is defined } \iff \exists a\; \exists P\in\P:\; \nu[t\gets t_0,\,v\gets a]\models P,
$$
$$
(\P,v)(t_0)\coloneqq \sup\{a \mid \exists P\in\P:\; \nu[t\gets t_0,\,v\gets a]\models P\}.
$$
By construction (in particular, the case-splits and the dominated-point removal in \Cref{alg:eliminatebysup}), the set under the supremum is either empty or a singleton, so this agrees with the intended pointwise robustness value whenever defined.

We prove correctness of each subroutine used by \textsc{Monitor} and then compose the results.

\begin{lemma}[Boundary Supremum]\label{lem:boundary-sup}
	Let $P \subseteq \RR^{m}\times\RR$ be a convex polyhedron over parameters $Y \in \RR^{m}$ and a scalar variable $x\in\RR$.
	For $y \in \RR^m$, write
	$$
	P_y \coloneqq \{x \mid (y,x)\in P\}
	\quad\text{and}\quad
	P_Y \coloneqq \{y \mid P_y\neq\emptyset\}.
	$$
	Let $o(y,x)=\alpha(y)\,x+\beta(y)$.
	For every $y\in P_Y$, the set $P_y$ is a (possibly open, possibly unbounded) interval, and we define its endpoints by
	$L(y)\coloneqq \inf P_y\in \RR\cup\{-\infty\}$ and $U(y)\coloneqq \sup P_y\in \RR\cup\{+\infty\}$.
	Then:
	$$
	\sup_{x\in P_y} o(y,x)=
	\begin{cases}
		\alpha(y)\,U(y)+\beta(y) & \text{if }\alpha(y)>0,\\
		\alpha(y)\,L(y)+\beta(y) & \text{if }\alpha(y)<0,\\
		\beta(y) & \text{if }\alpha(y)=0,
	\end{cases}
	$$
	with the conventions that $\alpha(y)\cdot(+\infty)=+\infty$ for $\alpha(y)>0$ and
	$\alpha(y)\cdot(-\infty)=+\infty$ for $\alpha(y)<0$.
\end{lemma}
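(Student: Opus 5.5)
First I would establish that each section $P_y$ is an interval. A convex polyhedron is an intersection of finitely many (closed or strict) affine half-spaces in $(y,x)$. Fixing $y$ turns each constraint into one of three forms: a constraint $\gamma x \bowtie c(y)$ with $\gamma\neq 0$, or an $x$-free constraint that is simply true or false. A constraint of the first kind is a half-line in $x$, open or closed according to $\bowtie$. So $P_y$ is a finite intersection of half-lines, possibly together with $\emptyset$ or $\RR$. Such an intersection is convex in $\RR$, and hence an interval, possibly open at either end and possibly unbounded. For $y\in P_Y$ it is nonempty, so $L(y)=\inf P_y$ and $U(y)=\sup P_y$ are well defined in the extended reals, with $L(y)\le U(y)$.

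Next I would fix $y\in P_Y$ and write $\alpha=\alpha(y)$, $\beta=\beta(y)$, so that $x\mapsto \alpha x+\beta$ is an affine function of one variable on the interval $P_y$. The proof then splits on the sign of $\alpha$.

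\emph{Case $\alpha=0$.} The function is constant, equal to $\beta$, on a nonempty set, so its supremum is $\beta$.

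\emph{Case $\alpha>0$.} The function is strictly increasing and continuous. The upper bound is immediate: for every $x\in P_y$ we have $x\le U(y)$, hence $\alpha x+\beta\le \alpha U(y)+\beta$, with the extended convention when $U(y)=+\infty$. For the matching lower bound, I take a sequence $x_k\in P_y$ with $x_k\to U(y)$, which exists by the definition of supremum. If $U(y)<\infty$, continuity gives $\alpha x_k+\beta\to\alpha U(y)+\beta$. If $U(y)=+\infty$, then $\alpha x_k+\beta\to+\infty$, matching the convention $\alpha\cdot(+\infty)=+\infty$. In either case the value $\alpha U(y)+\beta$ is the supremum. It is attained exactly when the right endpoint belongs to $P_y$, but attainment is not needed.

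\emph{Case $\alpha<0$.} This is symmetric: I would either substitute $x\mapsto -x$, which swaps the roles of $L$ and $U$, or repeat the argument with a sequence $x_k\to L(y)$. The stated convention $\alpha\cdot(-\infty)=+\infty$ for $\alpha<0$ covers the unbounded-below situation.

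The only real subtlety is the case of open or unbounded endpoints. Here the result must be phrased as a supremum rather than a maximum, and it relies on the approximating-sequence argument together with the extended-arithmetic conventions. It also matters that the endpoint is genuinely the infimum or supremum of the section $P_y$ and not merely one of the affine bounds of $P$; this is exactly the point \textsc{PlpMaximize} handles through the guards $A_u$ and $F_u$. The lemma itself is stated with the true endpoints, so no further difficulty arises here.
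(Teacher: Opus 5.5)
Your proposal is correct and takes essentially the same route as the paper. You show that $P_y$ is an interval because it is a finite intersection of half-lines, then split on the sign of $\alpha(y)$, using monotonicity for the upper bound and an approximation of the endpoint for the matching lower bound (your sequence $x_k\to U(y)$ plays the role of the paper's $\varepsilon$-argument), with the case $\alpha<0$ handled by symmetry.
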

\begin{proof}
	Fix $y\in P_Y$.
	Since $P$ is convex and defined by finitely many linear inequalities, $P_y$ is the intersection of finitely many halfspaces in~$\RR$ and is therefore a (possibly open, possibly unbounded) interval.
	If $\alpha(y)=0$, then $o(y,x)=\beta(y)$ is constant in~$x$ and the claim follows.
	
	Assume $\alpha(y)>0$.
	Then $o(y,\cdot)$ is strictly increasing.
	If $U(y)=+\infty$, then for every $M$ there exists $x\in P_y$ with $x>M$, hence $o(y,x)$ is arbitrarily large and the supremum is $+\infty$.
	If $U(y)<+\infty$, then for all $x\in P_y$ we have $x\le U(y)$, hence $o(y,x)\le \alpha(y)U(y)+\beta(y)$.
	Conversely, by definition of $U(y)=\sup P_y$, for every $\varepsilon>0$ there exists $x_\varepsilon\in P_y$ with $x_\varepsilon>U(y)-\varepsilon$, so
	$o(y,x_\varepsilon)>\alpha(y)(U(y)-\varepsilon)+\beta(y)$.
	Letting $\varepsilon\to 0$ yields $\sup_{x\in P_y} o(y,x)=\alpha(y)U(y)+\beta(y)$.
	
	The case $\alpha(y)<0$ is symmetric: $o(y,\cdot)$ is strictly decreasing, so the supremum is achieved (as a supremum, not necessarily attained) at the lower endpoint $L(y)$, and if $L(y)=-\infty$ the objective diverges to $+\infty$ as $x\to -\infty$.
\qed\end{proof}

\begin{lemma}[Correctness of \textsc{PlpMaximize} (\Cref{alg:parametriclpmaximize})]\label{lem:plp-correct}
	Let $P$ be a convex polyhedron over variables $(Y,x)$ and let $o(Y,x)$ be an affine objective.
	Let $(\Q,v')$ be the output of $\textsc{PlpMaximize}(P,o,x,v')$.
	Let $\pi_Y(\cdot)$ denote projection onto $Y$-variables.
	We interpret any output piece containing the special constraint $\{v' = +\infty\}$ as assigning the value $+\infty$ (i.e., $v'$ ranges over $\RR\cup\{+\infty\}$ on that piece).
	Then, for every valuation $\nu$ of the parameter variables $Y$,
	$$
	\nu \in \pi_Y\Bigl(\bigcup \Q\Bigr)
	\quad\Longleftrightarrow\quad
	\exists x\in\RR:\; (\nu,x)\in P,
	$$
	and whenever $\nu \in \pi_Y(\bigcup \Q)$, the value $v'(\nu)$ induced by $\Q$ satisfies
	$$
	v'(\nu)=\sup\{\, o(\nu,x)\mid (\nu,x)\in P\,\}.
	$$
	Equivalently, under this convention, $\bigcup \Q$ is the graph of the (extended-real-valued) function $\nu\mapsto \sup\{o(\nu,x)\mid(\nu,x)\in P\}$ on its feasibility domain.
\end{lemma}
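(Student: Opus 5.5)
The plan is to fix a parameter valuation $\nu$ (written $y$) and reduce both claims to \cref{lem:boundary-sup}. Write $o=\alpha x+\beta$ via \textsc{SplitCoeff}, and the constraints of $P$ as $\mathcal{L}$ (lower bounds $x\ge \ell(Y)$, or $x>\ell(Y)$), $\mathcal{U}$ (upper bounds) and the $x$-free guard $P_0$ via \textsc{IsolateBounds}. Then for each $y$ we have $P_y=\{x\mid \ell(y)\le x\le u(y)\ \forall \ell\in\mathcal{L},u\in\mathcal{U}\}$ if $y\in P_0$ and $P_y=\emptyset$ otherwise, with strictness inherited from the constraints. Hence $L(y)=\max_{\ell\in\mathcal{L}}\ell(y)$ and $U(y)=\min_{u\in\mathcal{U}}u(y)$, with $L=-\infty$ if $\mathcal{L}=\emptyset$ and $U=+\infty$ if $\mathcal{U}=\emptyset$. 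Also $P_Y=\textsc{Eliminate}(\{x\},P)$ is exactly the set of $y$ with $P_y\neq\emptyset$, by the definition of projection.

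\emph{Domain claim.} Every piece added to $\Q$ is intersected with $P_Y$, so $\pi_Y(\bigcup\Q)\subseteq P_Y$. Conversely, let $y\in P_Y$. Then $y\in P_0$, and exactly one of $G^+,G^-,G^0$ contains $y$, according to the sign of $\alpha(y)$.
\begin{itemize}
\item If $\alpha(y)=0$, the $G^0$ piece covers $y$.
\item If $\alpha(y)>0$ and $\mathcal{U}=\emptyset$, the $+\infty$ piece covers $y$.
\item If $\alpha(y)>0$ and $\mathcal{U}\neq\emptyset$, pick $u^\ast\in\mathcal{U}$ attaining $\min_{u}u(y)$. Then $y\in A_{u^\ast}$. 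Nonemptiness of $P_y$ gives $\ell(y)\le u^\ast(y)$ for all $\ell$, so $y\in F_{u^\ast}$.
\item The case $\alpha(y)<0$ is symmetric with $\ell^\ast$ attaining $\max_\ell\ell(y)$.
\end{itemize}
Thus $y\in\pi_Y(\bigcup\Q)$.

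\emph{Value claim.} Let $y$ lie in some piece. We must show every piece containing $y$ assigns the value from \cref{lem:boundary-sup}; this also shows the induced value is single-valued. The sign of $\alpha(y)$ determines the branch uniquely, since $G^+,G^-,G^0$ are disjoint.
\begin{itemize}
\item In the $G^+$ branch with $\mathcal{U}=\emptyset$, we have $U(y)=+\infty$, and the lemma gives $+\infty$.
\item In the $G^+$ branch with $\mathcal{U}\neq\emptyset$, any $u$ whose piece contains $y$ satisfies $A_u$. So $u(y)=\min_{u'}u'(y)=U(y)$, and the assigned value $\alpha(y)u(y)+\beta(y)$ equals the lemma's value. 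Ties among several minimizing $u$ give the same value, so the set under the supremum in the encoding convention is a singleton.
\item The $G^-$ and $G^0$ branches are symmetric, using $L(y)$ and $\beta(y)$ respectively.
\end{itemize}

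The step I expect to need the most care is the bookkeeping around strict versus non-strict bounds and the guards. $A_u$ and $F_u$ are written with non-strict inequalities, but $P_y$ may be half-open. Feasibility therefore requires $\ell<u$ when one of the bounds is strict, and a degenerate $\ell=u$ with a strict bound must be excluded. I would resolve this by noting that the pieces are always intersected with $P_Y$, which already encodes exact feasibility including strictness. Then $F_u$ is only needed as a harmless redundant constraint, and the supremum is unaffected by openness, as \cref{lem:boundary-sup} states.

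I would also remark that $\alpha$ and $\beta$ are affine in $Y$. The guards $G^\pm$ and $G^0$ are therefore polyhedral, and the sign partition is exhaustive on $P_0$. The graph formulation in the statement then follows immediately from the domain and value claims.
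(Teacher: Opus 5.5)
Your proposal is correct and follows essentially the same route as the paper. Both fix $\nu$ and invoke \cref{lem:boundary-sup}, split on the sign of $\alpha(\nu)$, pick a tightest bound $u^\star$ (resp.\ $\ell^\star$) to show $\nu\in A_{u^\star}\sqcap F_{u^\star}$, and use the intersection with $P_Y$ for the domain claim. Your explicit single-valuedness check ($A_u$ forces $u(\nu)=U(\nu)$, and $G^+,G^-,G^0$ are disjoint) and your remark that strictness is handled by $P_Y$ rather than by the non-strict guard $F_u$ simply make precise what the paper covers in a parenthetical remark about ties.
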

\begin{proof}
	Let $o=\alpha x+\beta$ be the decomposition produced by \textsc{SplitCoeff}, where $\alpha$ and $\beta$ are affine in~$Y$.
	The routine \textsc{IsolateBounds} rewrites the constraints of $P$ into the form
	$$
	P \equiv P_0(Y)\ \sqcap\ \Bigl(\bigsqcap_{\ell\in\mathcal{L}}\{x\ \bowtie_\ell\ \ell(Y)\}\Bigr)\ \sqcap\ \Bigl(\bigsqcap_{u\in\mathcal{U}}\{x\ \bowtie_u\ u(Y)\}\Bigr),
	$$
	where $P_0$ contains exactly those constraints of $P$ not involving~$x$, and each $\ell\in\mathcal{L}$ and $u\in\mathcal{U}$ is affine in~$Y$.
	Moreover, each $\bowtie_\ell$ is either $\ge$ or $>$ and each $\bowtie_u$ is either $\le$ or $<$, depending on whether the corresponding original constraint was non-strict or strict.
	Let $P_Y=\textsc{Eliminate}(\{x\},P)$ be the projection onto~$Y$; then $P_Y=\{y \mid P_y\neq\emptyset\}$.
	
	Fix $\nu\in P_Y$.
	Instantiating $Y=\nu$ turns the bounds above into inequalities over~$x$ only, hence $P_\nu$ is an interval with endpoints
	$L(\nu)=\sup\{\ell(\nu)\mid \ell\in\mathcal{L}\}$ (or $-\infty$ if $\mathcal{L}=\emptyset$) and
	$U(\nu)=\inf\{u(\nu)\mid u\in\mathcal{U}\}$ (or $+\infty$ if $\mathcal{U}=\emptyset$).
	The algorithm splits the parameter space by the sign of $\alpha$ into
	$G^+=P_0\sqcap\{\alpha>0\}$, $G^-=P_0\sqcap\{\alpha<0\}$, and $G^0=P_0\sqcap\{\alpha=0\}$,
	and always intersects the produced pieces with $P_Y$, so all output points lie in the feasibility domain.
	
	\emph{Case $\alpha(\nu)>0$.}
	If $\mathcal{U}=\emptyset$ then $U(\nu)=+\infty$ and by \Cref{lem:boundary-sup} the supremum is $+\infty$; the algorithm returns $v'=+\infty$ on $G^+\sqcap P_Y$, hence in particular at~$\nu$.
	If $\mathcal{U}\neq\emptyset$, let $u^\star\in\mathcal{U}$ be an upper bound attaining the minimum at~$\nu$, i.e.\ $u^\star(\nu)=U(\nu)$ (such $u^\star$ exists because $\mathcal{U}$ is finite).
	Then $\nu\models A_{u^\star}$, since $u^\star(\nu)\le u'(\nu)$ for all $u'\in\mathcal{U}$, and $\nu\models F_{u^\star}$, since feasibility implies $\ell(\nu)\le U(\nu)=u^\star(\nu)$ for all $\ell\in\mathcal{L}$.
	Thus $\nu$ lies in the region where the algorithm sets $v'=\alpha u^\star+\beta=\alpha U+\beta$, which equals $\sup_{x\in P_\nu} o(\nu,x)$ by \Cref{lem:boundary-sup}.
	(Ties between several minimizing upper bounds only create overlapping output pieces with the same $v'$ value, which does not affect correctness.)
	
	\emph{Case $\alpha(\nu)<0$.}
	The argument is symmetric.
	If $\mathcal{L}=\emptyset$, then $L(\nu)=-\infty$ and \Cref{lem:boundary-sup} yields supremum $+\infty$; the algorithm outputs $v'=+\infty$ on $G^-\sqcap P_Y$.
	If $\mathcal{L}\neq\emptyset$, pick $\ell^\star\in\mathcal{L}$ attaining the maximum at~$\nu$, so $\ell^\star(\nu)=L(\nu)$; then $\nu\models A_{\ell^\star}$ and $\nu\models F_{\ell^\star}$, and the algorithm sets $v'=\alpha \ell^\star+\beta=\alpha L+\beta=\sup_{x\in P_\nu} o(\nu,x)$.
	
	\emph{Case $\alpha(\nu)=0$.}
	Then $o(\nu,x)=\beta(\nu)$ is independent of~$x$, so $\sup_{x\in P_\nu} o(\nu,x)=\beta(\nu)$ and the algorithm outputs $v'=\beta$ on $G^0\sqcap P_Y$.
	
	Finally, for $\nu\notin P_Y$ we have $P_\nu=\emptyset$, and every output region is intersected with $P_Y$, so $\nu\notin \pi_Y(\bigcup\Q)$.
	This proves both the domain equivalence and the supremum identity.
\qed\end{proof}

\begin{lemma}[Correctness of \textsc{EliminateBySup} (\Cref{alg:eliminatebysup})]
	\label{lem:eliminatebysup-correct}
	Let $\P$ be a finite union of convex polyhedra over variables $(Y,x,v)$, and assume that on each polyhedron $P\in\P$
	the variable $v$ is uniquely determined as an affine function of $(Y,x)$.
	Let $I\subseteq\RR$ be the quantification interval
	and $(\P',v')=\textsc{EliminateBySup}(\P,v,x,I)$.
	Let $\pi_Y(\cdot)$ denote projection onto $Y$-variables.
	Then for every valuation $\nu$ of~$Y$,
	$$
	\nu \in \pi_Y\Bigl(\bigcup \P'\Bigr)
	\quad\Longleftrightarrow\quad
	\exists b\in I:\; \exists P\in\P:\; \exists a:\; \nu[x\gets b,\,v\gets a]\models P,
	$$
	and whenever $\nu \in \pi_Y(\bigcup\P')$, the induced output value satisfies
	$$
	v'(\nu)=\sup\bigl\{\, a \,\big|\,
	\exists b\in I,\ \exists P\in\P:\ \nu[x\gets b,\,v\gets a]\models P \,\bigr\}.
	$$
	In other words, $\bigcup\P'$ is the graph of the (partial) map
	$\nu\mapsto \sup\{\, a \mid \exists b\in I,\ \exists P\in\P:\ \nu[x\gets b,\,v\gets a]\models P \,\}$.
\end{lemma}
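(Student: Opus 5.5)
The plan is to reduce \Cref{lem:eliminatebysup-correct} to \Cref{lem:plp-correct} piece by piece, and then show that the final dominated-point removal turns the union of per-piece suprema into the pointwise maximum. Fix a valuation $\nu$ of $Y$.

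\emph{Restriction and normalization.} For each $P\in\P$, let $P_I=P\sqcap\{x\in[\inf I,\sup I]\}$. I first note that replacing $I$ by its closure does not change the supremum. Each section of $P_I$ at $\nu$ is an interval, and $v$ is affine (hence continuous) in $x$ on it. So passing to the closure at most adds boundary points whose values are limits of values already present; these leave the supremum unchanged. The domain side needs the same care: if $\nu$ is feasible only at an endpoint excluded from $I$, the closed version would wrongly report it as feasible. I would handle this either by keeping the strict constraint (the polyhedra library supports strict inequalities, and \Cref{lem:boundary-sup} already allows open intervals) or by reading the restriction to the closure as applying to the supremum value only.

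Next, since $v$ is uniquely determined on $P$ as an affine function $o(Y,x)$, \textsc{NormalizeObj} yields $P_I'=\pi_{Y,x}(P_I)$ together with $o$. This gives, for each $b$,
\[
\nu[x\gets b,v\gets a]\models P_I \iff (\nu,b)\in P_I' \text{ and } a=o(\nu,b).
\]

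\emph{Per-piece supremum.} Applying \Cref{lem:plp-correct} to $(P_I',o)$, the output $\Q_P$ is defined at $\nu$ exactly when some $b\in I$ and some $a$ satisfy $\nu[x\gets b,v\gets a]\models P$. On that domain its value is $s_P(\nu)=\sup\{a\mid\exists b\in I:\nu[x\gets b,v\gets a]\models P\}$. Taking the union over $P\in\P$ therefore gives a multi-valued relation whose projection onto $Y$ is exactly the right-hand side of the domain equivalence. The final filtering step removes only points dominated by another point with the same $Y$, so it never removes every point above a given $\nu$ (because $\P$ is finite, a maximum exists among the finitely many values $s_P(\nu)$). Hence the domain is preserved.

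\emph{Max of suprema.} After filtering, the surviving value at $\nu$ is $\max_{P\text{ feasible at }\nu}s_P(\nu)$. A supremum over a finite union of sets equals the maximum of the suprema, which yields the claimed formula, including the case $+\infty$ when some piece is unbounded. Overlapping pieces with equal values are harmless, so $\bigcup\P'$ is indeed the graph of a function.

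The step I expect to be the main obstacle is the interplay between the closure/strictness of $I$ and the domain statement, together with the claim that the filtering keeps the output single-valued. Both require making the extended-real convention for $v'=+\infty$ precise within the dominance comparison, in the same way as in \Cref{lem:plp-correct}.
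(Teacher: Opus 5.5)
Your argument is correct and follows the paper's proof essentially step for step: restrict each piece to $x\in I$, eliminate $v$ via \textsc{NormalizeObj}, invoke \Cref{lem:plp-correct} for the per-piece supremum and domain, and use that a supremum over a finite union is the maximum of the per-piece suprema, which the dominated-point removal implements while preserving the domain. On the closure issue, the paper's proof simply works with $P\sqcap\{x\in I\}$, as in Line~4 of \Cref{alg:eliminatebysup} (your first option). If you instead took the closure for the values, the feasibility test must still use $I$ itself and be done per piece; otherwise a piece meeting $[\inf I,\sup I]$ only at an excluded endpoint could contribute a spurious larger value at a $\nu$ that is feasible through another piece.
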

\begin{proof}
	For each input polyhedron $P\in\P$, \textsc{EliminateBySup} first forms
	$
	P_I \coloneqq P \sqcap \{x\in I\}.
	$
	Thus, for every valuation $\nu$ of $Y$, the set of feasible instantiations for $(x,v)$ in $P_I$ is exactly
	$
	\bigl\{(b,a)\ \big|\ b\in I \ \land\ \nu[x\gets b,\,v\gets a]\models P \bigr\},
	$
	so both the feasibility domain and the intended supremum over $x\in I$ are preserved.
	
	Next, \textsc{NormalizeObj} eliminates the old robustness variable $v$ by extracting the affine objective
	$o_P(Y,x)$ such that $P_I \models (v=o_P(Y,x))$, producing an equivalent polyhedron $P_I'$ over $(Y,x)$ together with objective $o_P$.
	Therefore, for every $\nu$, we have
	$
	\sup\bigl\{\, a \,\big|\ \exists b\in I:\ \nu[x\gets b,\,v\gets a]\models P \,\bigr\}
	=
	\sup\{\, o_P(\nu,x)\mid (\nu,x)\in P_I'\,\},
	$
	with the left-hand set is empty iff the right-hand feasible set is empty.
	
	Applying \textsc{PlpMaximize} to $(P_I',o_P)$ yields a union $\Q_P$ over $(Y,v')$ such that, by \Cref{lem:plp-correct},
	for all $\nu$, if $\nu \in \pi_Y\Bigl(\bigcup\Q_P\Bigr)$ then $v'(\nu)=\sup\{\, o_P(\nu,x)\mid (\nu,x)\in P_I'\,\}$.
	Moreover, \textsc{PlpMaximize} intersects every generated guard with
	$P_Y \coloneqq \textsc{Eliminate}(\{x\},P_I')$, hence
	$\pi_Y(\bigcup\Q_P)\subseteq P_Y$.
	Conversely, if $\nu\in P_Y$ then there exists $x$ with $(\nu,x)\in P_I'$.
	Writing $o_P=\alpha x+\beta$, the case split $\alpha>0$, $\alpha<0$, $\alpha=0$ places $\nu$ in one of the corresponding regions
	$G^+$, $G^-$, $G^0$ of \Cref{alg:parametriclpmaximize}.
	In each case, either the relevant bound set is empty (and the algorithm adds the whole region intersected with $P_Y$),
	or a tightest bound is attained (the bound sets are finite), so $\nu$ satisfies the corresponding guard $A_u\sqcap F_u$
	(or $A_\ell\sqcap F_\ell$), and hence $\nu\in \pi_Y(\bigcup\Q_P)$.
	Therefore $\pi_Y(\bigcup\Q_P)=P_Y$, i.e., $\nu \in \pi_Y\Bigl(\bigcup\Q_P\Bigr)$ iff $\exists x:\ (\nu,x)\in P_I'$ iff $\exists b\in I:\ \exists a:\ \nu[x\gets b,\,v\gets a]\models P$.
	The algorithm then takes the union $\bigcup_{P\in\P}\Q_P$ over all pieces.
	Fix $\nu$ such that $\exists b\in I,\ \exists P\in\P,\ \exists a:\ \nu[x\gets b,\,v\gets a]\models P$.
	Since a supremum over a finite union of feasible sets equals the maximum of the per-piece suprema, we have
	\begin{align*}
		&\sup\bigl\{\, a \,\big|\ \exists b\in I,\ \exists P\in\P:\ \nu[x\gets b,\,v\gets a]\models P \,\bigr\} \\
		&\;=\;
		\max_{P\in\P:\ \exists b\in I\ \exists a:\ \nu[x\gets b,\,v\gets a]\models P}\ \sup\bigl\{\, a \,\big|\ \exists b\in I:\ \nu[x\gets b,\,v\gets a]\models P \,\bigr\}.
	\end{align*}
	The final dominated-point removal step in \Cref{alg:eliminatebysup} keeps, for each fixed $\nu$, exactly the points with maximal $v'$ among all pieces, thereby implementing this pointwise maximum.
	In particular, it preserves the domain $\pi_Y(\bigcup\P')$ (it only removes non-maximal values at already-present $\nu$).
	Thus $\bigcup\P'$ is precisely the graph of the supremum function on its feasibility domain, and it is empty over those $\nu$ for which no feasible instantiation $b\in I$ exists.
\qed\end{proof}

\begin{lemma}[Correctness of \textsc{Term} (\Cref{alg:term})]\label{lem:term-correct}
	Fix a trace $w$.
	Let $\theta$ be a term and let $P_\dom$ be the current conjunction of domain constraints passed to \textsc{Term}.
	Let $(\P,v)=\textsc{Term}(\theta,P_\dom)$.
	Then for every valuation $\nu$ of the variables occurring in $\theta$ or in $P_\dom$ (excluding the fresh output variable $v$),
	$$
	\Bigl(\nu\models P_\dom \ \land\ \sem{\theta}_{w,\nu}\ \text{is defined}\Bigr)
	\quad\Longleftrightarrow\quad
	\exists a:\ \nu[v\gets a]\models \bigcup\P,
	$$
	and whenever this holds, the (necessarily unique) witnessing $a$ equals $\sem{\theta}_{w,\nu}$.
\end{lemma}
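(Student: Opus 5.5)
We argue by structural induction on $\theta$, following the case distinction in \Cref{alg:term}. For each case we establish two things. The first is the domain equivalence: a valuation $\nu$ admits some $a$ with $\nu[v\gets a]\models\bigcup\P$ iff $\nu\models P_\dom$ and $\sem{\theta}_{w,\nu}$ is defined. The second is functionality: any such $a$ equals $\sem{\theta}_{w,\nu}$. Uniqueness of $a$ then follows from the second property, since all witnesses coincide with the same value.

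\emph{Base cases.} For a constant $n$, the output is $\{v=n\}\sqcap P_\dom$. A witness $a$ exists iff $\nu\models P_\dom$, and it is forced to be $n=\sem{n}_{w,\nu}$. Constants are always defined, so the domain equivalence holds. The cases of a value variable $r$ and a time variable $s$ are identical, with $v=\nu(r)$ or $v=\nu(s)$ respectively. Both are defined whenever $\nu$ assigns the variable, which holds by the choice of $\nu$.

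\emph{Sum and difference.} Let $\theta\equiv\theta_1\pm\theta_2$. By the induction hypothesis, $(\P_j,v_j)$ encodes $\sem{\theta_j}_{w,\nu}$ exactly on $P_\dom\cap\{\sem{\theta_j}\text{ defined}\}$. A point of $(\P_1\sqcap\P_2)\sqcap\{v=v_1\pm v_2\}\sqcap P_\dom$ over $\nu$ exists iff both subterm witnesses exist. Those witnesses are then uniquely $v_j=\sem{\theta_j}_{w,\nu}$, which forces $v=\sem{\theta_1}\pm\sem{\theta_2}$. Projecting out $v_1,v_2$ preserves existence of the witness, because \textsc{Eliminate} is existential projection. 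It also preserves the value of $v$, because that value was already determined. Definedness of $\theta$ is exactly joint definedness of $\theta_1$ and $\theta_2$, so the equivalence carries over.

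\emph{Signal access.} This case, $\theta\equiv f(\tau)$, is the main obstacle. The output is $\P_f[t_f\mapsto\tau, v_f\mapsto v]\sqcap P_\dom$, so the argument rests on an invariant about $\P_f$. We would first state it as a separate claim: $\bigcup\P_f$ is exactly the graph $\{(t_f,v_f)\mid t_f\in\mathrm{dom}, v_f=\sem{f}_w(t_f)\}$ of the piecewise-linear interpolation on the currently stored time domain. The claim relies on three facts: each piece is the graph of an affine function on a half-open interval, the pieces are disjoint in $t_f$, and their union covers the stored domain. We take "defined" for $f(\tau)$ to mean that $\sem{\tau}_{w,\nu}$ lies in this stored domain, consistent with \cref{rem:well-definedness}. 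Proving that the stored domain covers every access actually needed is deferred to the \textsc{Monitor} argument, where the backward horizon is used.

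Since $\tau$ is a linear time term, the substitution $t_f\mapsto\tau$ is affine, so each substituted piece remains a convex polyhedron. Moreover, $\nu[v\gets a]$ satisfies a substituted piece iff $(\sem{\tau}_{w,\nu},a)$ lies in the original piece. Combining this with the graph invariant yields both the domain equivalence and $a=\sem{f}_w(\sem{\tau}_{w,\nu})$. The remaining care points are that $\tau$ itself contains no signal access (by the grammar), so $\sem{\tau}_{w,\nu}$ is always defined, and that the half-open intervals make the witness unique even at sample points.
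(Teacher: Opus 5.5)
Your proposal is correct and follows the paper's proof essentially step for step. Both use structural induction on $\theta$; both rest the signal-access case on the invariant that $\P_f$ encodes the graph of the interpolated signal on the currently stored domain; both read ``defined'' relative to that stored domain, deferring coverage of the required window to the \textsc{Monitor} argument. Your extra remarks (the affine substitution keeps pieces polyhedral, $\tau$ contains no signal access, and the witness is unique by the graph property) only spell out what the paper leaves implicit.
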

\begin{proof}
	Throughout the proof, we use that for each function symbol $f$ the current list $\P_f$ encodes the graph of	$\llbracket f \rrbracket_w$ on the time domain currently stored by the monitor (with piecewise-linear interpolation).
	We proceed by structural induction on $\theta$.
	
	If $\theta$ is a constant $n$ or a variable ($r$ or $s$), \textsc{Term} returns a single polyhedron enforcing $v=n$ (resp.\ $v=r$, $v=s$) conjoined with $P_\dom$, so the claim is immediate.
	
	If $\theta\equiv \theta_1\pm \theta_2$, let $(\P_1,v_1)$ and $(\P_2,v_2)$ be the recursive results.
	By the induction hypothesis, for any $\nu\models P_\dom$, satisfying $\bigcup\P_1$ (resp.\ $\bigcup\P_2$) is equivalent to $\sem{\theta_1}_{w,\nu}$ (resp.\ $\sem{\theta_2}_{w,\nu}$) being defined, and in that case the unique values of $v_1$ and $v_2$	equal these semantics.
	The algorithm constructs $\P$ by intersecting $\P_1$ and $\P_2$, adding the constraint $v=v_1\pm v_2$, and projecting away $v_1$ and $v_2$.
	Therefore $\nu[v\gets a]\models\bigcup\P$ holds exactly when both subterms are defined and	$a=\sem{\theta_1}_{w,\nu}\pm \sem{\theta_2}_{w,\nu}=\sem{\theta}_{w,\nu}$.
	
	If $\theta\equiv f(\tau)$, then by assumption $\P_f$ is a polyhedral graph representation of $f$:
	for any time point $s$ in the currently stored signal domain there is a polyhedron in $\P_f$ enforcing $v_f=\llbracket f\rrbracket_w(s)$ when $t_f=s$.
	\textsc{Term} substitutes $t_f\mapsto \tau$ and $v_f\mapsto v$ and conjoins $P_\dom$.
	Hence, for any valuation $\nu\models P_\dom$, the constraints are satisfiable exactly when the access time $\sem{\tau}_{w,\nu}$ lies in the stored domain of $f$ (i.e., the access is well-defined with respect to the available trace), and then they enforce $v=\llbracket f\rrbracket_w(\sem{\tau}_{w,\nu})=\sem{f(\tau)}_{w,\nu}$.
\qed\end{proof}

\begin{lemma}[Correctness of \textsc{FormulaRobust} (\Cref{alg:formularobust})]\label{lem:fr-correct}
	Let $\phi$ be a pSFO formula and let $P_\dom$ be the current conjunction of domain constraints for the free variables of~$\phi$ on the current segment.
	Let $(\P,v)=\textsc{FormulaRobust}(\phi,P_\dom)$.
	Then for every valuation $\nu$ of the free variables of $\phi$ (excluding the fresh output variable $v$),
	$$
	\Bigl(\nu\models P_\dom \ \land\ \rob(w,\nu,\phi)\ \text{is defined}\Bigr)
	\quad\Longleftrightarrow\quad
	\exists a\in \RR\cup\{-\infty,+\infty\}:\ \nu[v\gets a]\models \bigcup\P,
	$$
	and whenever this holds, the witnessing $a$ is unique and equals $\rob(w,\nu,\phi)$.
\end{lemma}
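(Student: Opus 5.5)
We prove the lemma by structural induction on $\phi$, following the case split of \Cref{alg:formularobust}. The invariant carried through the induction is the two-part claim of the statement. First, the domain of the encoded partial map is exactly the set of valuations satisfying $P_\dom$ on which $\rob(w,\nu,\phi)$ is defined. Second, on that domain the output variable is functionally determined and equals the robustness. Throughout, we rely on the fact that every returned piece fixes $v$ as an affine function of the remaining variables, or sets it to $\pm\infty$. This is what lets us apply \Cref{lem:eliminatebysup-correct}, whose hypothesis requires $v$ to be uniquely determined on each polyhedron. We therefore strengthen the induction hypothesis to include this functionality property explicitly.

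\emph{Atomic case.} For $\phi\equiv(\theta_1<\theta_2)$, the algorithm calls \textsc{Term} on $\theta_2-\theta_1$. By \Cref{lem:term-correct}, the returned union is satisfiable at $\nu$ exactly when $\nu\models P_\dom$ and the term is defined, and the unique witness is then $\sem{\theta_2}_{w,\nu}-\sem{\theta_1}_{w,\nu}=\rob(w,\nu,\phi)$. The term is defined precisely when all signal accesses are well-defined, which is the definedness condition for the atom under \cref{rem:well-definedness}.

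\emph{Negation.} Here we intersect with $v''=-v'$ and then project out $v'$. This is a bijective affine relabelling of the unique witness, so both domain and uniqueness transfer, and the value becomes $-\rob(w,\nu,\phi')$. On pieces where $v'=\pm\infty$, we read the constraint symbolically, giving $\mp\infty$.

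\emph{Disjunction.} The algorithm intersects $\P_1$ and $\P_2$. By the induction hypothesis, the intersection is satisfiable at $\nu$ exactly when both disjuncts have defined robustness, which is the uniform well-definedness convention for $\lor$. The two guards $v_1\ge v_2$ and $v_1<v_2$ partition the pair of unique witnesses, so exactly one branch fires, and it sets $v'=\max\{v_1,v_2\}$. Projecting out $v_1,v_2$ then preserves uniqueness.

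\emph{Existential quantifiers.} Both value and time quantifiers are handled identically. The recursive call runs with $P_\dom\sqcap\{x\in J\}$, so by the induction hypothesis, $\bigcup\P'$ is the graph of $(\nu,b)\mapsto\rob(w,\nu[x\gets b],\phi')$ over exactly those $b\in J$ for which that robustness is defined. \Cref{lem:eliminatebysup-correct} then says the output is satisfiable at $\nu$ iff some such $b$ exists, and its value is the supremum over these $b$. This matches the definition of $\rob(w,\nu,\exists x\in J.\,\phi')$ together with the convention that the supremum ranges only over defined instantiations. Two side conditions must be checked.
\begin{itemize}
\item The hypothesis of \Cref{lem:eliminatebysup-correct} that $v'$ is affine on each piece follows from the strengthened induction hypothesis.
\item \textsc{EliminateBySup} replaces $J$ by its closure. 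This is harmless because it changes neither the supremum nor definedness: a point of $\overline{J}\setminus J$ is only a limit of admissible points.
\end{itemize}

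\emph{Expected main obstacle.} The delicate step is keeping the output \emph{functional}, that is, unique witnesses, through quantifier elimination, and handling pieces with $v=\pm\infty$. Infinite values arise from the unbounded branches of \textsc{PlpMaximize}. They then propagate into negation, disjunction comparisons, and later quantifiers, where "affine in $(Y,x)$" fails. I would handle this by treating $\pm\infty$ pieces as a separate constant-valued class. Such pieces are functional trivially; the disjunction guards extend to the extended reals; and in \textsc{EliminateBySup} a $+\infty$ piece dominates while a $-\infty$ piece contributes $-\infty$. In the bounded-response pSFO setting, time intervals are bounded, so infinite values can only stem from unbounded value quantifiers.

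A second, smaller gap is the closure of $J$. The domain of definedness for $\exists$ is computed by the recursive call with the original constraint $x\in J$, so no spurious points are added. I would state this explicitly.
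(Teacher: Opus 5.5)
Your proposal follows the paper's proof essentially step for step: structural induction with the atomic case via \Cref{lem:term-correct}, negation as the relabelling $v=-v'$ followed by projection, disjunction via intersecting $\P_1,\P_2$ and splitting on $v_1\ge v_2$ versus $v_1<v_2$, and quantifiers via \Cref{lem:eliminatebysup-correct} applied to the graph over admissible instantiations $b\in J$. Strengthening the induction hypothesis so that each piece fixes the output variable affinely or to $\pm\infty$ is a welcome refinement of the same argument: it discharges the per-piece affineness hypothesis of \Cref{lem:eliminatebysup-correct}, which the paper invokes without checking.
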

\begin{proof}
	We prove the claim by structural induction on $\phi$.
	
	Let $\phi\equiv(\theta_1<\theta_2)$.
	Then \textsc{FormulaRobust} calls \textsc{Term} on the term $\theta_2-\theta_1$ and returns the resulting pair $(\P,v)$.
	By \Cref{lem:term-correct}, for every valuation $\nu$ we have that
	$\nu\models P_\dom$ and $\sem{\theta_2-\theta_1}_{w,\nu}$ is defined iff there exists a (necessarily unique) $a\in\RR$ with
	$\nu[v\gets a]\models\bigcup\P$, and then $a=\sem{\theta_2-\theta_1}_{w,\nu}$.
	The claim follows since
	$$
	\rob(w,\nu,\theta_1<\theta_2)=\sem{\theta_2}_{w,\nu}-\sem{\theta_1}_{w,\nu}
	=\sem{\theta_2-\theta_1}_{w,\nu}.
	$$
	
	Let $\phi\equiv \neg\phi'$ and let $(\P',v')=\textsc{FormulaRobust}(\phi',P_\dom)$.
	By the induction hypothesis, for every $\nu\models P_\dom$,
	$\rob(w,\nu,\phi')$ is defined iff there exists a unique $b\in \RR\cup\{-\infty,+\infty\}$ such that
	$\nu[v'\gets b]\models\bigcup\P'$, and then $b=\rob(w,\nu,\phi')$.
	The algorithm introduces a fresh output variable $v$, conjoins the constraint $v=-v'$, and projects out $v'$.
	Hence, for every $\nu\models P_\dom$, there exists $a$ with $\nu[v\gets a]\models\bigcup\P$ iff there exists $b$ with
	$\nu[v'\gets b]\models\bigcup\P'$ and $a=-b$.
	Therefore the output is defined exactly when $\rob(w,\nu,\phi')$ is defined, and whenever defined it satisfies
	$$
	a=-\rob(w,\nu,\phi')=\rob(w,\nu,\neg\phi').
	$$
	
	Let $\phi\equiv \phi_1\lor \phi_2$ and let $(\P_i,v_i)=\textsc{FormulaRobust}(\phi_i,P_\dom)$ for $i\in\{1,2\}$.
	By the induction hypothesis, for every $\nu\models P_\dom$ and each $i\in\{1,2\}$,
	$\rob(w,\nu,\phi_i)$ is defined iff there exists a unique $a_i\in \RR\cup\{-\infty,+\infty\}$ such that
	$\nu[v_i\gets a_i]\models\bigcup\P_i$, and then $a_i=\rob(w,\nu,\phi_i)$.
	Under our well-definedness convention (\Cref{rem:well-definedness}), $\rob(w,\nu,\phi_1\lor\phi_2)$ is defined iff both
	$\rob(w,\nu,\phi_1)$ and $\rob(w,\nu,\phi_2)$ are defined.
	The algorithm enforces this by intersecting $\P_1$ and $\P_2$, splitting into the cases
	$v_1\ge v_2$ and $v_1<v_2$, setting $v=v_1$ and $v=v_2$ respectively, and projecting out $v_1,v_2$.
	Therefore, for every $\nu\models P_\dom$, the output is defined iff both $a_1$ and $a_2$ exist, and whenever defined the unique output value satisfies
	$$
	a=\max\{a_1,a_2\}
	=\max\{\rob(w,\nu,\phi_1),\rob(w,\nu,\phi_2)\}
	=\rob(w,\nu,\phi_1\lor\phi_2).
	$$
	
	Let $\phi\equiv \exists x\in J.\,\phi'$, where $x$ is either a value variable (so $J=R$) or a time variable (so $J=I$).
	The algorithm first computes
	$(\P',v')=\textsc{FormulaRobust}\bigl(\phi',\,P_\dom\sqcap\{x\in J\}\bigr)$,
	and then returns $(\P,v)=\textsc{EliminateBySup}(\P',v',x,J)$.
	Fix a valuation $\nu$ of the free variables of $\phi$ with $\nu\models P_\dom$, and define the set of admissible instantiations
	$
	D_\nu \coloneqq \{\, b\in J \mid \rob(w,\nu[x\gets b],\phi')\ \text{is defined}\,\}.
	$
	By the quantitative semantics (together with the well-definedness convention for quantifiers),
	$\rob(w,\nu,\exists x\in J.\phi')$ is defined iff $D_\nu\neq\emptyset$, and then
	$
	\rob(w,\nu,\exists x\in J.\phi')=\sup_{b\in D_\nu}\rob(w,\nu[x\gets b],\phi').
	$
	By the induction hypothesis applied to the recursive call on $\phi'$, for every $b\in J$ the following holds:
	if $b\in D_\nu$, then there exists a unique value
	$a_b=\rob(w,\nu[x\gets b],\phi')\in \RR\cup\{-\infty,+\infty\}$
	such that
	$
	\nu[x\gets b,\,v'\gets a_b]\models \bigcup\P'.
	$
	Equivalently, $\bigcup\P'$ is the graph of the partial map $(\nu,b)\mapsto \rob(w,\nu[x\gets b],\phi')$ over those pairs with $b\in D_\nu$.
	Applying \Cref{lem:eliminatebysup-correct} to $\textsc{EliminateBySup}(\P',v',x,J)$ therefore yields that the resulting output is defined at~$\nu$ iff $D_\nu\neq\emptyset$, and whenever defined its unique output value satisfies
	$$
	a
	=\sup_{b\in D_\nu} a_b
	=\sup_{b\in D_\nu}\rob(w,\nu[x\gets b],\phi')
	=\rob(w,\nu,\exists x\in J.\phi').
	$$
\qed\end{proof}

\begin{proof}[of \cref{cl:correctness-monitor}]
	Fix $i\ge 1$ and $t_0\in[(i-1)\Delta,i\Delta)$.
	Let $P_\dom=\{t\in[(i-1)\Delta,i\Delta)\}$, and let $v_i=(\P_i,v)$ be the pair output by \textsc{Monitor} at iteration~$i$, i.e.\ $(\P_i,v)=\textsc{FormulaRobust}(\phi,P_\dom)$.
	
	\emph{Signal availability on the required window.}
	By construction of \textsc{Monitor}, after the garbage-collection step at the end of iteration $i-1$ (vacuously for $i=1$), each list $\P_f$ contains all signal pieces whose right endpoint is at least $(i-1)\Delta-h$.
	After receiving $w_i$, \textsc{Monitor} appends the fresh piecewise-linear constraints for $w_i$.
	Consequently, immediately before calling \textsc{FormulaRobust} at iteration~$i$, each $\P_f$ encodes the graph of $\llbracket f\rrbracket_w$ on (at least) the time interval $[(i-1)\Delta-h,\ i\Delta)$.
	
	\emph{Correctness on the current segment.}
	Because $\phi$ is pSFO and $h$ is a backward horizon for $\phi$, every signal access time term $\tau$ occurring in $\phi$ satisfies
	$t-h \le \tau \le t$
	for every valuation of bound variables consistent with their quantifier bounds.
	Instantiating $t=t_0\in[(i-1)\Delta,i\Delta)$ yields
	$\tau \in [t_0-h,t_0]\ \subseteq\ [(i-1)\Delta-h,i\Delta)$.
	Thus, whenever $\rob(w,t_0,\phi)$ is defined, all signal values needed to evaluate $\phi$ at $t_0$ lie within the stored signal window represented by the current $\P_f$ lists.
	Since also $t_0\models P_\dom$, we can apply \Cref{lem:fr-correct} to the call $\textsc{FormulaRobust}(\phi,P_\dom)$ at time~$t_0$.
	Therefore, $\rob(w,t_0,\phi)$ is defined iff $v_i(t_0)$ is defined, and in the defined case we have $v_i(t_0)=\rob(w,t_0,\phi)$.
	This proves the first two claims of the theorem.
	
	\emph{Dependence only on $[t_0-h,t_0]$.}
	By the same horizon argument, every signal read performed when evaluating $\phi$ at time $t_0$ occurs at some time in $[t_0-h,t_0]$.
	Hence, whenever $v_i(t_0)$ is defined, modifying $w$ outside $[t_0-h,t_0]$ does not change any accessed signal value and therefore cannot change $\rob(w,t_0,\phi)=v_i(t_0)$.
	
	\emph{Soundness of the garbage-collection step.}
	After producing $v_i$, \textsc{Monitor} drops from each $\P_f$ all pieces ending strictly before $i\Delta-h$.
	For any later evaluation time $t\ge i\Delta$, the required history window satisfies $[t-h,t]\subseteq[i\Delta-h,t]$, so no dropped piece can be accessed in any subsequent iteration.
	Thus garbage collection preserves correctness of all future outputs while ensuring bounded memory.
\qed\end{proof}

\end{document}